\numberwithin{equation}{section}
\newtheorem{theorem}{Theorem}
\newtheorem{lemma}{Lemma}
\newtheorem{assumption}{Assumption}
\theoremstyle{remark}
\newtheorem{rmark}{Remark}
\newcommand\RE{\mathbb{R}}
\newcommand\rd{\mathrm{d}}
\newcommand\ru{\mathrm{u}}
\newcommand\rv{\mathrm{v}}
\DeclareMathOperator{\graph}{Graph}
\DeclareMathOperator{\convex}{conv}
\DeclareMathOperator{\Proj}{Proj}
\DeclareMathOperator{\Dom}{Dom}
\DeclareMathOperator{\interior}{int}
\DeclareMathOperator{\sign}{sign}
\DeclareMathOperator*{\Argmin}{arg\,min}
\title{Robust Output Regulation of \\ 
 Linear Passive Systems with \\
 Multivalued Upper Semicontinuous Controls}
\author{F\'elix A. Miranda and Fernando Casta\~nos}
\begin{document}
\maketitle

\begin{abstract}
The use of multivalued controls derived from a special maximal
monotone operator are studied in this note. Starting with a 
strictly passive linear system (with possible parametric 
uncertainty and external disturbances) a multivalued control 
law is derived, ensuring regulation of the output to a 
desired value.
The methodology used falls in a passivity-based control context,
where we study how the multivalued control affects the dissipation
equation of the closed-loop system, from which we derive its
robustness properties. Finally, some numerical examples together with
implementation issues are presented to support the main result.

\end{abstract}

\section{Introduction}


Sometimes it is useful to have an interpretation of the
action of the controller in energetic terms. Among the 
most important methodologies of passivity-based control (PBC) 
that achieve this interpretation are the so-called energy shaping
techniques. The purpose of energy shaping, as its name suggests, 
is to change the energy function (by means of the control action) in
such a way that stabilization and performance objectives are
satisfied. Although energy-shaping strategies have proved to be very
useful yielding an easy interpretation of the controller in energetic 
terms~\cite{ortega1998}, robustness against external perturbations
and model uncertainty is still a topic of research.

On the other hand, the study of differential inclusions for modelling
and analysis of processes in control theory is extensive (e.g.~\cite{acary2008,filippov1988,leine2008}),
whereas the problem of designing a multivalued control in order to achieve a
desired response is less explored, except for the case of sliding 
mode control, which takes advantage of the multivalued nature of the
signum multifunction in order to ensure robustness of the closed-loop
system.

An important family of differential inclusions (more general than
those obtained by using sliding modes techniques) are those for 
which its right-hand side is represented by maximal monotone
operators. 
In the case of linear plants, the closed-loop system is sometimes
called a multivalued Lur'e dynamical system, for which results about
existence and uniqueness of solutions have been proved in~\cite{brogliato2004,brogliato2011,brogliato2013,camlibel2002}.
This kind of systems are related to complementarity and projected
dynamical systems~\cite{brogliato2006}, which makes its study important for a 
broad range of applications coming from different fields such
as automatic control, economics, mechanics, etc.


The main contribution of this note consists in a design procedure for
a multivalued-control --- where the multivalued part is represented
by the subdifferential of some proper, convex and lower
semicontinuous function --- which achieves finite-time regulation of
the desired output together with insensitivity against a family of
bounded and unmatched perturbations.

The proposed multivalued control strategy differs remarkably
from those which are common in Sliding-Mode Control in the sense
that we obtain finite-time regulation and disturbance rejection
without a discontinuous right-hand side and therefore without the 
necessity of solutions of the associated system in the the
sense of Filippov.

This note is organized as follows. In Section~\ref{sec:ProblemForm}
the class of systems that we consider is established in conjunction
with the class of perturbations that it will be treated. The 
multivalued structure of the controller is presented and 
well-posedness of the closed-loop system is established. In Section~\ref{sec:Main},
we introduce the main result of this note. Namely, 
robustness and finite time convergence of the closed-loop system
are demonstrated. Section~\ref{sec:Implementation} touches the point 
about implementation of the multivalued control law by introducing
a regularization of the multivalued map. Some examples are presented
showing the closed loop properties. The note ends with some
conclusions and future research lines in Section~\ref{sec:Conclusions}.


\subsection{Notations and some basics}

Throughout this note, all vectors are column
vectors, even the gradient of a scalar function that we denote 
by $\nabla H(x) = \frac{\partial H(x)}{\partial x}$. A matrix 
$A \in \RE^{n \times n}$ is called positive definite (denoted 
as $A > 0$), if $w^{\top} A w > 0  $ for all $w \in \RE^{n} 
\setminus \{0\}$ (note that we are not assuming $A$ symmetric). 

A set-valued function or multifunction $\mathbf{F}: \RE^{n}
\to 2^{\RE^{n}}$ is a map that associates with any $w \in
\RE^{n}$ a subset $\mathbf{F}(w) \subset \RE^{n}$. The 
domain of $\mathbf{F}$ is given by
\begin{displaymath}
\Dom \mathbf{F} = \{ w \in \RE^{n}: \mathbf{F}(w) \neq \emptyset\} \;,
\end{displaymath}
related with the definition of a multifunction is the concept 
of its graph,
\begin{displaymath}
\graph \mathbf{F} = \{ (w,z) \in \RE^{n} \times \RE^{n}: 
z \in \mathbf{F}(w) \} \;.
\end{displaymath}
The graph is used to define the concept of monotonicity of
a multifunction in the following way: A set-valued function
$\mathbf{F}$ is said to be monotone if for all $(w,z) \in \graph 
\mathbf{F}$ and all $(w',z') \in \graph \mathbf{F}$ the relation 
\begin{displaymath}
\langle z - z', w - w' \rangle \geq 0
\end{displaymath}
is preserved, where $\langle \cdot, \cdot \rangle $ denotes the usual
scalar product on $ \RE^{n} $. A monotone map $\mathbf{F}$ is called
maximal monotone if, for every pair $(\hat{w}, \hat{z}) \in \RE^{n}
\times \RE^{n} \setminus \graph \mathbf{F}$, there exits $(w,z) \in
\graph \mathbf{F}$ with $\langle z - \hat{z}, w - \hat{w} \rangle <
0$, or in other words, if no enlargement of its graph is possible in 
$\RE^{n} \times \RE^{n}$ without destroying monotonicity.

Let $f: \RE^{n} \to \RE \cup \{+ \infty\}$
be a proper, convex and lower semi-continuous function.
The effective domain of $f$ is given by
\begin{displaymath}
\Dom f = \{w \in \RE^{n} : f(w) < \infty \} \;.
\end{displaymath}
We say that $f$ is proper if its effective domain is non empty.
The subdifferential $\partial f(w)$ of $f(\cdot)$ at $ w \in
\RE^{n} $ is defined by
\begin{displaymath}
\partial f(w) = \{ \zeta \in \RE^{n} : f(\sigma) - f(w)
\geq \langle \zeta , \sigma - w \rangle \text{ for all } \sigma \in
\RE^{n} \} \;.
\end{displaymath}
An important convex function is the indicator function of a convex
set $S$, defined by
\begin{displaymath}
 \psi_{\mathcal{S}}(w) =
  \begin{cases}
         0 & \mbox{if } w \in S \\
   +\infty & \mbox{if } w \notin S
\end{cases} \;.
\end{displaymath}
It is easy to see that when $f(\cdot)$ is equal to the indicator
function of a closed convex set $S$, then the subdifferential 
coincides with the normal cone of the set $S$ at the point 
$w \in S$, i.e.,
\begin{displaymath}
\partial \psi_{S}(w) = N_{S}(w) = \{ \xi \in \RE^{n} : 0 \geq 
\langle \xi , \sigma - w \rangle \text{ for all } \sigma \in S \} \;.
\end{displaymath}
Note that if $w$ is in the interior of $S$ then $N_{S}(w) = \{ 0 \}$.
If $w \notin S$ then $N_{S}(w) = \emptyset$.

\section{The output regulation problem} \label{sec:ProblemForm}

Consider the following affine system:
\begin{equation} \label{eq:OriginalSys}
\Sigma : \begin{cases}
\dot{x}(t)  =  Ax(t) + B_{\ru} u_1(t) + B_{\rv} v(t) \cr
y_1(t)  =  C x(t) + D u_1(t)
\end{cases} \;,
\end{equation}
where $x \in \RE^{n}$ denotes the system state,
$u_1, y_1 \in \RE^{m}$ are the port variables available for 
interconnection, which are conjugated in the sense that their product
has units of power, and matrices $A, B_{\ru}, B_{\rv}, C, D$ 
are constant and of suitable dimensions. The term 
$v \in \RE^{m}$ accounts for an uncertain exogenous input 
which is considered bounded. Moreover, without loss of generality,
the external signal $v(t)$ can be decomposed as the sum of a
constant term $v^{+} $ and a bounded signal $\nu(t)$.  

%
%
The robust output regulation problem consists in regulating the 
output $y_1$ to a desired value $y_{\rd}$, even in the presence of
the external perturbation $v(t)$ and parametric uncertainties.

\begin{rmark}
Notice that, for $D = 0$ and $B_{\ru} = B_{\rv}$ , the
problem reduces to a standard sliding-mode control problem 
with matched disturbances. We depart from these standard 
assumptions and make the following instead.
\end{rmark}
\begin{assumption} \label{ass:pass}
There exists a (possibly unknown) matrix $P = P^\top > 0$ 
such that
\begin{equation} \label{eq:LMI1}
 \begin{bmatrix}
  P A + A^{\top} P & P B_\ru - C^{\top} \\  
  B_\ru^{\top} P - C & -(D + D^{\top}) \end{bmatrix} < 0 \;.
\end{equation}
\end{assumption}
Assumption~\ref{ass:pass} is a rewrite of the strict
passivity property of plant~\eqref{eq:OriginalSys} with 
respect to the input $u_1$ and output $y_1$~\cite{knockaert2005}.
Moreover, is easy to see that the
passivity assumption implies that $D$ is positive definite.
Equivalently, Assumption~\ref{ass:pass} can be rewritten in terms
of the energy-balance equation. More
precisely, there exists a continuously differentiable function 
$H: \RE^{n} \to \RE$, called the storage function, such that for 
all $t \geq 0$ we have
\begin{displaymath}
H(x(t)) - H(x(0)) = \int_{0}^{t} u_1^{\top}(\tau) y_1(\tau) d\tau -
 \rd(t) \;,
\end{displaymath}
where the function $\rd(t) \geq 0$ is associated to the dissipation of the
system and the storage function $H(x)$ is bounded from below (see, e.g.,~\cite{ortega1998}).

\begin{rmark}
The strict passivity assumption allows us to admit the quadratic 
function $H_0(x) = x^{\top} P x$ as a storage function with $P$ 
satisfying~\eqref{eq:LMI1}.
\end{rmark}

It is worth noting that in the linear case, the class of passive 
systems is equivalent to the class of Port-Hamiltonian (PH) systems
described in~\cite[Ch. 4]{shaft1996}, i.e., $\Sigma$ can be
written as
\begin{align*}
\dot{x} &= F \nabla H_0(x) + g_{\ru } u_1 + g_{\rv} v \\
y_1 & = h(x) + j u_1
\end{align*}
with $F = A P^{-1} = J - R$, where $J = -J^{\top}$ and $R = R^{\top}
\geq 0$ are the so called interconnection and dissipation matrices,
respectively, $g_{\ru} = B_{\ru}$, $g_{\rv} = B_{\rv}$, $h(x) = Cx$
and $j = D$.

Along this note we will use both representations
of $\Sigma$ with the purpose of expressing the related
computations in the context of basic interconnection and damping
assignment (IDA)~\cite{ortega2002,shaft1996}.


\subsection{Multivalued control law}

In this subsection a multivalued control law is introduced
by using maximal monotone operators. It will be shown later on
that these are robust in the face of parametric and additive
uncertainties.

Let $u_2 \in \RE^{m}$ and $y_2 \in \RE^{m}$ be the 
port variables available for interconnection associated to the
controller. The multivalued control input is defined in terms of the
graph of a multifunction $\mathbf{U}:\RE^{m} \to 2^{\RE^{m}}$ by
\begin{displaymath}
(u_2, y_2) \in \graph \mathbf{U} \;.
\end{displaymath}
\begin{rmark}
It is worthy to mention that, in the case when the multifunction
$\mathbf{U}$ is monotone, the relation $(u_2,y_2) \in \graph 
\mathbf{U}$ defines a static, incrementally passive map\footnote{A
multivalued map $\mathbf{F}$ is called incrementally passive if 
$\langle y - y', u - u' \rangle \geq 0$ for all $(u, y) \in \graph
\mathbf{F}$ and for all $(u', y') \in \graph \mathbf{F}$.}.
Furthermore, if $0 \in \mathbf{U}(0)$, then the relation between 
$u_2$ and $y_2$ defines a static passive map inasmuch as
\begin{displaymath}
\langle u_2, y_2 \rangle \geq 0 \text{ for all } (u_2, y_2) \in 
\graph \mathbf{U} \;.
\end{displaymath} 

\end{rmark}
Previous lines motivate the following assumption.
\begin{assumption}\label{ass:mono}
The multifunction $\mathbf{U}$ is maximal monotone, and defines a
static passive relation between the input $u_2$ and the output $y_2$. 
\end{assumption}
%
%
The multivalued nature of the proposed control motivates us to
depart from the classical intelligent control paradigm and to make 
use of the behavioural framework proposed by Willems~\cite{willems1997}
instead. In this context, the plant and the controller are interconnected
using a power preserving pattern as
shown in Figure~\ref{fig:int} satisfying: $y_1 = y_2 =: y $, $-u_1 =
u_2 =: u$ and therefore
\begin{displaymath}
u_1 y_1 + u_2 y_2 = 0 \;.
\end{displaymath}
\begin{figure}[!tb]
\begin{center}
 \includegraphics[width=0.35\textwidth]{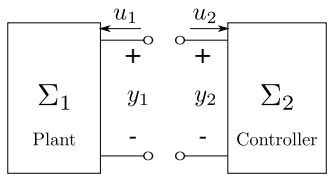}
 \caption{Interconnection of a controller to a plant.}
 \label{fig:int}
\end{center}
\end{figure}

The interconnected system (plant and controller) results in
\begin{subequations} \label{eq:SystemInterconnected}
\begin{align}
 \dot{x} &= A x - B_\ru u + B_\rv v \label{eq:ODEI} \\
     y &= C x - D u \label{eq:outputI} \\
     u &\in \mathbf{U}(y) \;, \label{eq:U}
\end{align}
\end{subequations}
where our task is to determine $\mathbf{U}(y)$ such that
$y$ is regulated to some fixed value $y_{\rd}$, 
even in the presence of uncertainties in the system parameters
and the external perturbation $v$. Note that the previous argument
rules out the trivial control $u = D^{-1}(C x - y_{\rd})$. In fact,
even if all the system parameters and the state $x$ were known,
that control would not be admissible, since it is not passive
(see Assumption~\ref{ass:mono}).

It is well known that when $\mathbf{U}(y)$ is given as the
subdifferential of a proper, convex and lower semicontinuous
function $\Phi(\cdot)$ (i.e. $\mathbf{U}(y) = \partial \Phi(y)$),
it is a maximal monotone operator~\cite[Cor. 31.5.2]{rockafellar1970}.
Therefore, we will focus on controls of the form
\begin{equation} \label{eq:uMulti}
 u \in \partial \Phi(y) \;, 
\end{equation}
for some proper, convex and lower semicontinuous function
$\Phi: \RE^{m} \to \RE^{m}$. More specifically, in Section~\ref{sec:Main}
we will prove that, for some closed convex set $S$, robust regulation
of the output $y$ is obtained for the case when $\Phi(y) = (\varphi + \psi_{S}) (y)$, 
where $\varphi(\cdot)$ is proper, convex and lower semicontinuous
with effective domain containing $S$ and $\psi_{S}(\cdot)$ is the
indicator function of the set $S$. In other words, $\Phi$ is the
restriction of $\varphi$ to $S$.


\subsection{Well-posedness} \label{subsec:Well}
Before presenting the main result of this note about robustness
of the closed-loop system~\eqref{eq:SystemInterconnected}, is 
important first to establish its well-posedness.
Specifically, well-posedness of the closed-loop system
comprises two issues. The first question is: Is there always a 
control input $u \in \partial \Phi(y)$? 
and the second one is about uniqueness and existence of solutions 
of the associated differential inclusion~\eqref{eq:SystemInterconnected}.

For the second issue about a solution of the differential 
inclusion~\eqref{eq:SystemInterconnected}, well-posedness 
was proved previously in~\cite{brogliato2011,brogliato2013},
where the subdifferential of the conjugate function of $\Phi(y)$
together with passivity of the associated system plays a crucial role.

The first issue deserves more explanation. At first we need $y \in S$
for all time $t$, this comes from the definition of the 
subdifferential, i.e., $u \in \partial \Phi(y)$ is equivalent to
\begin{displaymath}
\Phi(\sigma) - \Phi(y) \geq \langle u, \sigma - y \rangle
\text{ for all } \sigma \in \RE^{m} \;,
\end{displaymath}
where, in the case of $\Phi(y) = (\varphi + \psi_S)(y)$ we have
\begin{equation} \label{eq:subdiffIneq}
\varphi(\sigma) - \varphi(y) + \psi_S(\sigma) - \psi_S(y)
\geq \langle u, \sigma - y \rangle \text{ for all } \sigma \in \RE^{m} \;,
\end{equation}
and it is clear that if $y \notin S$, then we will have $\partial 
\Phi(y) = \emptyset$. Then, we must guarantee that, no matter what
the initial conditions are, it is possible to find an output $y \in S$,
such that $u \in \partial \Phi(y)$ is well defined.

In the case where $\varphi \equiv 0$ and the matrix $D$ 
is symmetric, well-posedness is easy to show. Since $u \in
\partial \psi_S(y) = N_S(y)$, from the definition of normal
cone we have
\begin{displaymath}
0 \geq \langle u, \sigma - y \rangle \text{ for all } \sigma \in S \;,
\end{displaymath}
which in view of~\eqref{eq:outputI} translates to
\begin{displaymath}
 0 \geq \langle D^{-1} (Cx - y), \sigma - y \rangle = (\sigma - y)^{\top}D^{-1}(Cx - y) = \langle Cx - y, \sigma - y \rangle_{D^{-1}}
\end{displaymath}<++>
for all $\sigma \in S$, with the inner product weighted by $D^{-1}$. From~\cite[p. 117]{hiriart1993}
we have that the above inequality is the characterization of the projection of $Cx$ onto the set $S$
with the induced norm $\| \cdot \|_{D^{-1}}$, i.e.
\begin{equation} \label{eq:yProj}
y = \Proj_{S}^{D^{-1}} (Cx) = \Argmin_{\sigma \in S} 
\| Cx - \sigma \|_{D^{-1}} \;,
\end{equation}
and the control input $u$ transforms into
\begin{equation} \label{eq:ctrlPhiZero}
u = D^{-1} \left( C x - \Proj_S^{D^{-1}} (C x ) \right) \in
N_S(y) \;.
\end{equation}
Therefore, in the case of $D$ symmetric, we can find an expression 
for the output $y$ in terms of the projection operator
$\Proj_S^{D^{-1}}( \cdot )$ (note that this implies $y \in S$
independently of the state $x$).
Moreover, due to the Lipschitzian property of the projection operator~\cite[p. 118]{hiriart1993},
substitution of $u$ in~\eqref{eq:SystemInterconnected} leads to a well-posed ordinary
differential equation (not a differential inclusion!) with a
Lipchitzian right-hand side (see~\cite{miranda2014} for a detailed development in the scalar case).

For the general case where $\varphi$ is not the zero function, and 
removing the assumption about the symmetry of $D$, from~\eqref{eq:subdiffIneq}
we have that the problem consists in finding $y \in S$ such that
\begin{equation} \label{eq:hemivi}
0 \leq \langle D^{-1} y - D^{-1} C x, \sigma - y \rangle +
\varphi(\sigma) - \varphi(y) \text{ for all } \sigma \in S \;,
\end{equation}
where we made use of~\eqref{eq:outputI}.  The inequality~\eqref{eq:hemivi} is an
hemivariational inequality\footnote{The interested reader is referred
to~\cite{facchinei2003,goeleven2003,nagurney1995}, and references therein for
more information and properties about variational and hemivariational
inequalities.}, for which existence and uniqueness of solutions can be
deduced from $D^{-1}$ as the following Lemma extracted from~\cite{goeleven2003} shows.
\begin{lemma}[Lemma 5.2.1, \cite{goeleven2003}] \label{lmma:goeleven}
Suppose that $F(y)$ is continuous and strongly monotone, i.e.
\begin{displaymath}
\langle F(y) - F(y'), y - y' \rangle \geq \eta \| y - y' \|^{2}
\end{displaymath}
for all $y, y' \in \RE^{m}$ and some $\eta > 0$. Then, for each $g \in \RE^{n}$ there exists an
unique solution $y^{*} \in S$ to
\begin{displaymath}
\langle F(y) - g, \sigma - y \rangle + \Phi(\sigma) - \Phi(y)
\geq 0 \text{ for all } \sigma \in S \;.
\end{displaymath}
\end{lemma}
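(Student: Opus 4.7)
The plan is to recast the hemivariational inequality as an inclusion and then appeal to the surjectivity theory of maximal monotone operators. Indeed, $y^{*} \in S$ satisfies the stated inequality for every $\sigma \in S$ if and only if $g - F(y^{*}) \in \partial \Phi(y^{*})$, equivalently $g \in (F + \partial \Phi)(y^{*})$; membership in $\partial \Phi$ automatically forces $y^{*} \in \Dom \Phi \subseteq S$, so no separate feasibility argument is needed. The whole lemma therefore reduces to proving that the operator $T := F + \partial \Phi$ is a bijection of $\RE^{m}$.

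Uniqueness is the easy half. Given two candidate solutions $y_{1}^{*}, y_{2}^{*}$, I would test the inequality for $y_{1}^{*}$ with $\sigma = y_{2}^{*}$ and for $y_{2}^{*}$ with $\sigma = y_{1}^{*}$. Adding the two relations cancels the $\Phi$-terms in pairs and leaves
\begin{displaymath}
\langle F(y_{1}^{*}) - F(y_{2}^{*}), y_{1}^{*} - y_{2}^{*} \rangle \leq 0 \;,
\end{displaymath}
which, combined with strong monotonicity, yields $\eta \| y_{1}^{*} - y_{2}^{*} \|^{2} \leq 0$ and hence $y_{1}^{*} = y_{2}^{*}$.

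For existence I would assemble the following chain. By Rockafellar's theorem, $\partial \Phi$ is maximal monotone because $\Phi$ is proper, convex and lower semicontinuous. The map $F$ is single-valued, continuous and monotone on all of $\RE^{m}$ and is therefore itself maximal monotone. Since $\Dom F = \RE^{m}$, the interior of $\Dom F$ automatically meets $\Dom \partial \Phi$, so Rockafellar's sum theorem gives that $T = F + \partial \Phi$ is maximal monotone; moreover the strong monotonicity of $F$ transfers unchanged to $T$. A strongly monotone maximal monotone operator is coercive in the sense that $\langle \zeta, y \rangle / \| y \| \to \infty$ along any selection $\zeta \in T(y)$, and the Browder--Minty surjectivity theorem then makes $T$ onto $\RE^{m}$. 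For each $g$ this yields some $y^{*} \in \Dom \partial \Phi \subseteq S$ with $g \in T(y^{*})$, which is exactly the desired inequality.

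The hardest part is the existence step, and specifically maximality of the sum \emph{without} assuming Lipschitz regularity of $F$. A more elementary attempt would iterate the prox-type map $y \mapsto \mathrm{prox}_{\lambda \Phi}\bigl(y - \lambda (F(y) - g)\bigr)$, whose fixed points also solve the variational inequality; the prox is $1$-Lipschitz, but turning the composition into a contraction for small $\lambda$ requires Lipschitz continuity of $F$, which is not in the hypotheses. The sum-theorem route bypasses this by relying solely on continuity of $F$ and the full-domain condition, both of which are guaranteed here, so I expect the bulk of the technical work to sit in carefully invoking the sum theorem rather than in the uniqueness or the Browder--Minty step.
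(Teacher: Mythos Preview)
The paper does not prove this lemma at all: it is quoted verbatim from \cite{goeleven2003} (Lemma 5.2.1 there) and used as a black box to guarantee that the hemivariational inequality \eqref{eq:hemivi} has a unique solution. So there is no ``paper's own proof'' to compare against.

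Your argument is essentially the standard one and is correct in spirit: recast the inequality as the inclusion $g \in (F + \partial\Psi)(y^{*})$ for an appropriate convex function $\Psi$, invoke Rockafellar's sum theorem (which applies because $F$ is continuous, monotone and everywhere defined, hence maximal monotone with full domain), and then use strong monotonicity plus Browder--Minty surjectivity for existence; uniqueness via the symmetric substitution is immediate. One small slip: you assert $\Dom \Phi \subseteq S$ to justify the equivalence with $g - F(y^{*}) \in \partial \Phi(y^{*})$, but the lemma as stated keeps $\Phi$ and the constraint set $S$ separate (and in the paper's application the opposite inclusion holds, $S \subseteq \Dom \varphi$). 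The fix is trivial --- work with $\Psi := \Phi + \psi_{S}$ throughout, so that $\Dom \Psi \subseteq S$ automatically and the variational inequality over $S$ is exactly $g - F(y^{*}) \in \partial \Psi(y^{*})$. With that adjustment your chain (maximality of $\partial\Psi$, sum theorem, coercivity from strong monotonicity, surjectivity) goes through cleanly.
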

Since $D$ is positive definite, it is straightforward to see that
$D^{-1}$ is positive definite too. Furthermore, the linear map 
$y \mapsto D^{-1} y$ is strongly monotone (as a consequence of applying 
Rayleigh's inequality). Then using Lemma~\ref{lmma:goeleven} we have 
that the hemivariational inequality problem~\eqref{eq:hemivi} 
has a unique solution for each state $x$. In other words:
For all $x \in \RE^{n}$, there always exists a unique 
$y \in S$ such that the control $u \in \partial \Phi(y)$ is well defined.
\begin{rmark} \label{rmark:partition}
The computation of the control input which forces $y \in S$
obviously depends on the solution of the hemivariational
inequality~\eqref{eq:hemivi} and therefore it depends implicitly on the actual
state of the plant $x$. This dependency of the state, induces a partition
in the phase space. 
\end{rmark}
\begin{rmark} \label{rmark:equivCtrl}
For the case when $\varphi \equiv 0$ and $D$ is symmetric,  
we might be tempted to use~\eqref{eq:ctrlPhiZero} as control input
(because it is passive), but unfortunately it is not implementable
in our setting because it depends explicitly on the system parameters and state. 
The role of~\eqref{eq:ctrlPhiZero} is analogous to the role 
of the equivalent control in sliding modes~\cite{utkin1992},
in the sense that it is not implementable but helps to determine 
the dynamics associated to the closed-loop system. See~\cite{miranda2014}
for an example of the use of the control~\eqref{eq:ctrlPhiZero} in the
scalar case and some implementation issues.
\end{rmark}
Following the steps in~\cite{miranda2014}, the control that results
from the solution of the hemivariational inequality~\eqref{eq:hemivi}
will act as an equivalent control, in the sense of Remark~\ref{rmark:equivCtrl}
and is not implementable under the assumption
that the state and the plant parameters are unknown. The implemented
control is described in Section~\ref{sec:Implementation}.


\section{Finite-time perfect output regulation} \label{sec:Main}

The main result of this note is presented in this section. Namely,  
from an energy-shaping point of view, we show that the multivalued 
control~\eqref{eq:uMulti} can be expressed as a basic IDA controller
plus a robustifying term denoted by $\eta$, affecting directly the 
dissipation of the closed-loop system and yielding to the output 
regulation despite the presence of external and parametric 
disturbances.

From the closed-loop equation of the system, we have
\begin{subequations} \label{eq:closed-loop}
\begin{align}
 \dot{x} &= A x - B_\ru u + B_\rv v \label{eq:ODEII} \\
     y &= C x - D u \label{eq:outputII} \\
     u &\in \partial \Phi (y) \label{eq:UII}
\end{align}
\end{subequations}
with $\Phi(\cdot) = (\varphi + \psi_S)(\cdot)$ for some convex
set $S$. The perturbation input $v(t)$, decomposed as a
constant term $v^{+}$ and a bounded unknown signal $\nu(t)$, affects
the dissipation equation in the following way.

Let $\bar{x}$ be the \emph{equilibrium} point of~\eqref{eq:closed-loop}
associated to a constant perturbation ($\nu(t) \equiv 0$) and 
input $u = 0$, i.e.,
\begin{equation} \label{eq:xBarDef}
0 = A \bar{x} + B_{\rv} v^{+} \;,
\end{equation}
and let $H_0$ be the storage function of system~\eqref{eq:closed-loop}
(i.e. $H_0(x) = x^{\top} P x$ with $P$ satisfying~\eqref{eq:LMI1}).
We obtain
\begin{align*}
0 &= F \nabla H_0(\bar{x}) + B_{\rv} v^{+} \\
  &= F \nabla H_0(\bar{x}) \pm F \nabla H_0(x) + B_{\rv} v^{+} \\
  &= -F \left( \nabla H_1(x) - \nabla H_0(x)  \right) + B_{\rv} v^{+}
\end{align*}
with $H_1(x) = (x-\bar{x})^{\top} P (x - \bar{x})$. Now, defining 
$H_{a}(x) = H_1(x)- H_0(x)$ we have the basic IDA controller 
equation~\cite{castanos2009} for $v^{+}$ as
\begin{displaymath}
F \nabla H_{a}(x) = B_{\rv} v^{+} \;.
\end{displaymath}
%
%
%
Then, we have that the term $v^{+}$ acts as an energy-shaping
\emph{control} changing the storage function of the uncontrolled
system $H_0$ to $H_1$ and therefore changing the equilibrium of the
system. The closed-loop system results in
\begin{subequations} \label{eq:sysH1}
\begin{align}
\dot{x} &= F \nabla H_1(x) - B_{\ru} u + B_{\rv} \nu
\label{eq:sysH1:Dyn}\\
y &= Cx - Du \label{eq:sysH1:Output}\\
u &\in \partial \Phi(y) \label{eq:sysH1:Ctrl}
\end{align}
\end{subequations}
%


For the case $\nu = 0$, a control input 
can be designed in order to obtain the asymptotic regulation of the 
output $y$ to $y_{\rd}$ using an energy-shaping interpretation
as follows.

%
\begin{lemma} \label{lmma:ubIDA}
For system~\eqref{eq:sysH1}, let $x_{*}$ be an admissible 
equilibrium associated to the constant control 
$\bar{u} = D^{-1}(C x_{*} - y_{\rd}) $, i.e. $x_{*}$ satisfies
\begin{equation} \label{eq:equiPoint}
0 = A x_{*} - B_{\ru} D^{-1} (C x_{*} - y_{\rd}) + B_{\rv} v^{+} \;.
\end{equation}
Then, $\bar{u}$ achieves regulation of the output to $y_{\rd}$ 
when $\nu = 0$. Furthermore, $\bar{u}$ is a basic IDA controller 
and satisfies
\begin{displaymath}
F \nabla H_b(x) = -B_{\ru} \bar{u} \;,
\end{displaymath}
with $H_b(x) = H_2(x) - H_1(x)$ and $H_2(x) = (x-x_{*})^{\top} 
P (x - x_{*})$.
\end{lemma}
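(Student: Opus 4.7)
The plan is to verify the two assertions of the lemma in the order they are stated, since both reduce to short algebraic manipulations once the defining relations for $\bar{x}$ and $x_{*}$ are combined with the passivity LMI.

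For the regulation claim, I would substitute $u = \bar{u}$ into~\eqref{eq:sysH1:Dyn} with $\nu = 0$ and add and subtract the right-hand side of~\eqref{eq:equiPoint}, which immediately yields $\dot{x} = A(x - x_{*})$. The upper-left block of the LMI in Assumption~\ref{ass:pass} gives $P A + A^{\top} P < 0$, so $A$ is Hurwitz and the function $H_2(x) = (x-x_{*})^{\top} P (x-x_{*})$ is a strict Lyapunov function for this linear error dynamics; taking its time derivative along trajectories and bounding via the LMI shows $x(t) \to x_{*}$ exponentially. Finally, substituting $\bar{u} = D^{-1}(C x_{*} - y_{\rd})$ into the output equation gives $y(t) = C x(t) - D\bar{u} \to C x_{*} - (C x_{*} - y_{\rd}) = y_{\rd}$, completing the first part.

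For the IDA identity, I would compute the gradients directly. Since $P$ is symmetric, $\nabla H_1(x) = 2P(x - \bar{x})$ and $\nabla H_2(x) = 2P(x - x_{*})$, so
\begin{displaymath}
\nabla H_b(x) = \nabla H_2(x) - \nabla H_1(x) = 2P(\bar{x} - x_{*}),
\end{displaymath}
a constant vector. Multiplying by $F = A P^{-1}$ collapses the $P$ factor and gives $F \nabla H_b(x) = 2 A(\bar{x} - x_{*})$ (the scalar factor being the same one implicit in the paper's storage-function convention used earlier for $H_1$). The identity $A(\bar{x} - x_{*}) = -B_{\ru} \bar{u}$ then follows by subtracting~\eqref{eq:equiPoint} from~\eqref{eq:xBarDef}, yielding exactly $F \nabla H_b(x) = -B_{\ru} \bar{u}$ as a constant vector field, which is the defining form of a basic IDA controller.

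There is no real obstacle in this proof; the only point requiring care is to be consistent with the normalization convention used when the paper introduced $H_0$ and derived the identity $F \nabla H_a(x) = B_{\rv} v^{+}$, so that the same factor appears in $F \nabla H_b$. Existence of the equilibrium $x_{*}$ itself is not part of what must be proved, since it is included as a hypothesis of the lemma through~\eqref{eq:equiPoint}; one may note in passing, however, that invertibility of $A$ (guaranteed by the Hurwitz property) ensures that such an $x_{*}$ is uniquely determined by $y_{\rd}$ and $v^{+}$.
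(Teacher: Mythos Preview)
Your proposal is correct and follows essentially the same approach as the paper: both arguments subtract the two equilibrium relations~\eqref{eq:xBarDef} and~\eqref{eq:equiPoint} to obtain $A(\bar{x}-x_*) = -B_\ru \bar{u}$, rewrite this in terms of $\nabla H_2 - \nabla H_1$ to get the IDA identity, and substitute $\bar{u}$ into the output equation to conclude $y = C(x-x_*)+y_\rd \to y_\rd$. Your convergence step (Hurwitz property of $A$ from the upper-left block of~\eqref{eq:LMI1}, with $H_2$ as a strict Lyapunov function for $\dot{x}=A(x-x_*)$) is simply a more explicit version of the paper's appeal to the shaped storage function.

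One small slip in your closing aside: uniqueness of $x_*$ from~\eqref{eq:equiPoint} would require invertibility of $A - B_\ru D^{-1} C$, not of $A$ alone, since the equation rearranges to $(A - B_\ru D^{-1} C)x_* = -B_\ru D^{-1} y_\rd - B_\rv v^{+}$. As you correctly note, however, existence of $x_*$ is a hypothesis of the lemma, so this does not affect the proof itself.
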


\begin{proof}
Let $x_{*}$ be an equilibrium of system~\eqref{eq:sysH1} satisfying~\eqref{eq:equiPoint}.
Then, from~\eqref{eq:xBarDef} we have that
\begin{displaymath}
0 = A(x_{*} - \bar{x}) -  B_{\ru} D^{-1} (C x_{*} - y_{\rd}) \;,
\end{displaymath}
or, in terms of the storage functions $H_1$ and $H_2$,
\begin{displaymath}
0 = - F(\nabla H_2(x) - \nabla H_1(x)) - B_{\ru} D^{-1} 
(C x_{*} - y_{\rd}) \;.
\end{displaymath}
Therefore, we obtain a change in the storage function from 
$H_1(x)$ with minimum at $\bar{x}$ to $H_2(x)$ with minimum 
at $x_{*}$ which implies convergence of the state $x$ to $x_{*}$.
Also, for $u = \bar{u}$ in~\eqref{eq:sysH1:Output} we have
\begin{displaymath}
y = C x - D D^{-1}(C x_{*} - y_d) = C (x - x_{*}) + y_d
\end{displaymath}
and $y \to y_d$ as $x \to x_{*}$.

\end{proof}

The control $\bar{u}$ described in Lemma~\ref{lmma:ubIDA}  shapes the
energy by changing the storage function. For the new storage
function $H_2$ we have that the control input $u = \bar{u} + \eta$ 
establishes a new dissipation equation as
\begin{equation} \label{eq:eb}
\begin{split}
 \dot{H}_2(x) &= \nabla H_2(x)^{\top} F \nabla H_2(x) - 
                  \nabla H_2(x)^{\top} B_{\ru} \eta + \nabla H_2(x)^{\top} B_{\rv} \nu \\
              &= \dfrac{1}{2} (x - x_{*})^{\top} (A^{\top} P + PA) (x -x_{*}) - (x - x_{*})^{\top} P B_{\ru} \eta  \\ 
              & \; {} - (y - Cx + D(\bar{u} + \eta))^{\top} \eta + (x-x_{*})^{\top} P B_{\rv} \nu \\
              &= \dfrac{1}{2} \begin{bmatrix} (x - x_{*})^{\top} & \eta^{\top} \end{bmatrix} 
                  \begin{bmatrix} A^{\top} P + PA & -P B_{\ru} + C^{\top} \\-B_{\ru}^{\top} P + C & -(D + D^{\top}) \end{bmatrix}
                  \begin{bmatrix} x - x_{*} \\ \eta \end{bmatrix} \\
              & \; {} - (y - y_{\rd})^{\top} \eta + (x-x_{*})^{\top} P B_{\rv} \nu \;,
\end{split}
\end{equation}
where in the case of $\nu = 0$ we obtain the energy-balancing 
equation changing the output to $-(y - y_{\rd})$.

\begin{rmark}
Note that the control $\bar{u}$ achieves the asymptotic regulation 
of the output $y$ via a change in the storage function $H_1(x)$
but, once again, $\bar{u}$ is not implementable, as it requires perfect 
knowledge of the state and system parameters and would lead to a closed-loop
system which is not robust.
\end{rmark}

In Section~\ref{subsec:Well} it was established that, when 
$\Phi(y) = \psi_S(y)$ and $D$ is symmetric, we have
\begin{displaymath}
y = \Proj_{S}^{D^{-1}}(C x) \;.
\end{displaymath}
This equation evidently shows the robustness property of the 
multivalued control law $u \in N_S(y)$, since it is not
necessary to maintain the state $x$ at a precise point. Instead, it
is sufficient to maintain $x$ in the set of points for which 
its  projection over $S$ is equal to $y_d$. This result obviously
depends of the shape of the set $S$ and in order to achieve robust
regulation it is necessary that $y_d \in \partial S$ and $\interior
N_S(y_{\rd}) \neq \emptyset$ (see Figures~\ref{fig:phaseProof},~\ref{fig:phaseAnotherSet} below).

The previous argument can be extended for the more general case
where $u \in \partial \Phi(y)$ with $\Phi(y) = (\varphi + 
\psi_S)(y)$ and $\varphi$ an arbitrary proper, convex and lower 
semicontinuous function, i.e. we can achieve robust output 
regulation for  a family of controls parametrized  by $\varphi$.

\begin{theorem}[Main result] \label{th:main}
Consider system~\eqref{eq:sysH1}, and suppose that 
Assumption~\ref{ass:pass} holds. 
Then, the family of controls that satisfy $u \in \partial 
\Phi(y)$, with $\Phi(y) = (\varphi + \psi_S) (y)$ for some
proper, convex and lower semicontinuous function $\varphi$ and 
some closed convex set $S$ specified in the proof, 
yields the robust output regulation $y = y_{\rd}$ in finite 
time whenever 
\begin{equation} \label{eq:ass1}
\langle D^{-1} (y_{\rd}  - Cx_{*}), y_{\rd} \rangle < 
\mathcal{D} \varphi(y_{\rd}, -y_{\rd}) \;, 
\end{equation}
and 
\begin{displaymath}
\| \nu \| \leq \mathcal{B} \;,
\end{displaymath}
for some $\mathcal{B} > 0$ specified along the proof. Here, 
$\mathcal{D} \varphi(y_0, d)$ is the directional derivative 
of the function $\varphi$ at the point $y_0$ in the direction $d$
and $x_{*}$ is the equilibrium associated to the basic IDA design 
(Lemma~\ref{lmma:ubIDA}). Furthermore, in the family of all controls,
there exist at least one that is passive.
\end{theorem}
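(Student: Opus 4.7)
The proof plan is to exploit the IDA decomposition $u=\bar{u}+\eta$ from Lemma~\ref{lmma:ubIDA} together with the dissipation identity~\eqref{eq:eb}. By Assumption~\ref{ass:pass} the quadratic form in $(x-x_{*},\eta)$ on the right-hand side of~\eqref{eq:eb} is negative definite, so there exists $\alpha>0$ with
\[
\dot{H}_{2} \le -\alpha\bigl(\|x-x_{*}\|^{2}+\|\eta\|^{2}\bigr) - (y-y_{\rd})^{\top}\eta + (x-x_{*})^{\top}PB_{\rv}\nu.
\]
I would then absorb the perturbation term via Young's inequality; this step is what fixes the permissible bound $\mathcal{B}$ in terms of $\alpha$, $P$ and $B_{\rv}$.

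Next, I would choose the convex set $S$ so that $y_{\rd}\in\partial S$ and $\interior N_{S}(y_{\rd})\ne\emptyset$, matching the geometric picture of the figures referenced just before the theorem (for instance a half-space or a cone with vertex at $y_{\rd}$). Rewritten as $\langle-\bar{u},y_{\rd}\rangle<\mathcal{D}\varphi(y_{\rd},-y_{\rd})$, condition~\eqref{eq:ass1} is the one-sided directional-derivative certificate that $\bar{u}\in\partial\varphi(y_{\rd})+N_{S}(y_{\rd})=\partial\Phi(y_{\rd})$ with strict ``room'' inside the subdifferential. The defining subdifferential inequality applied with test point $\sigma=y_{\rd}$ yields $-(y-y_{\rd})^{\top}u\le\varphi(y_{\rd})-\varphi(y)$; combining this with the bound on $(y-y_{\rd})^{\top}\bar{u}$ extracted from~\eqref{eq:ass1} translates into an upper bound on $-(y-y_{\rd})^{\top}\eta$ that vanishes only when $y=y_{\rd}$.

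Putting everything together, I expect to extract $\dot{H}_{2}\le-\beta$ for some $\beta>0$ on the complement of the sliding set $\{y=y_{\rd}\}$. A standard comparison argument then drives the trajectory into this set within time $T\le H_{2}(x(0))/\beta$, and afterwards the multivalued operator selects the equivalent control in the sense of Remark~\ref{rmark:equivCtrl} to keep $y\equiv y_{\rd}$ for all later times, which is precisely the finite-time output regulation claim. For the passive member of the family I would exhibit an explicit $\varphi$ for which $0\in\partial\Phi(0)$; the simplest candidate is $\varphi\equiv 0$ together with an admissible $S$ containing the origin, so that $\graph\partial\Phi$ defines a static passive relation by the remark following Assumption~\ref{ass:mono}.

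The delicate step is going from the interior membership $\bar{u}\in\interior\partial\Phi(y_{\rd})$ to a uniform strictly-negative rate on $\{y\ne y_{\rd}\}$, because $y$ is only implicitly defined via the hemivariational inequality of Section~\ref{subsec:Well} and can approach $y_{\rd}$ tangentially along $\partial S$. The directional-derivative structure of $\varphi$ at $y_{\rd}$ and the strict inequality in~\eqref{eq:ass1} are what should supply the needed quantitative lower bound. Throughout, I would keep every constant dependent only on \emph{a priori} bounds on the plant data $P,C,D,B_{\rv}$, so that the argument survives parametric uncertainty on its own.
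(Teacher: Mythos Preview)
Your overall Lyapunov machinery and your handling of the cross term $-(y-y_{\rd})^{\top}\eta$ are close to the paper's, but the finite-time argument you are aiming at will not close. You want a uniform $\dot H_{2}\le -\beta<0$ on $\{y\ne y_{\rd}\}$ and then $T\le H_{2}(x(0))/\beta$. This cannot be obtained here: as $x\to x_{*}$ one has $y\to y_{\rd}$, $\eta\to 0$, and both the quadratic term and $-(y-y_{\rd})^{\top}\eta$ vanish, so no uniform negative rate survives on the complement of $\{y=y_{\rd}\}$. You flag this yourself as the ``delicate step'', but it is not merely delicate---it is the wrong target.

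The mechanism the paper uses is geometric rather than sliding-mode-like. With the specific choice $S=\convex\{0,y_{\rd}\}$ (a segment, not a half-space), every $\sigma\in S$ is $\mu y_{\rd}$ for some $\mu\in[0,1]$, and one checks via the hemivariational inequality~\eqref{eq:hemivi} that the state-space region $\Omega_{\rd}=\{x:\langle D^{-1}(y_{\rd}-Cx),y_{\rd}\rangle\le\mathcal D\varphi(y_{\rd},-y_{\rd})\}$ is \emph{exactly} where the implicit output equals $y_{\rd}$. This region is a half-space with nonempty interior, not a thin manifold. Condition~\eqref{eq:ass1} says $x_{*}\in\interior\Omega_{\rd}$, so a small ellipsoid $\mathcal E=\{H_{2}\le\delta\}$ sits inside $\interior\Omega_{\rd}$. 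The dissipation estimate is then used only to show $\mathcal E$ is attractive and invariant (this is where the bound $\mathcal B$ on $\|\nu\|$ comes from, via a $\Lambda$-splitting of the perturbation term); asymptotic attraction to $\mathcal E\subset\interior\Omega_{\rd}$ forces the trajectory to cross $\partial\Omega_{\rd}$ at some finite $t^{*}$ and stay inside thereafter, which gives $y\equiv y_{\rd}$ for $t\ge t^{*}$. Note also that your proposed half-space $S$ would give $N_{S}(y_{\rd})$ equal to a ray, with empty interior for $m\ge 2$; the segment choice is what makes $N_{S}(y_{\rd})$ a half-space and condition~\eqref{eq:ass1} meaningful in the direction $-y_{\rd}$. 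Your passivity argument via $\varphi\equiv 0$ with $0\in S$ is fine and matches the paper's criterion $\varphi(y)\ge\varphi(0)$ on $S$.
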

%


%
\begin{proof}
Applying the control input $u \in \partial \Phi(y)$ 
automatically implies that $y \in S$ (see Subsection~\ref{subsec:Well}).
Then, if we want the regulation of $y$ to $y_{\rd}$ a necessary condition is $y_{\rd} \in S$. 
Consider the following convex set
\begin{equation} \label{eq:Sset}
S = \convex \{ 0, y_{\rd} \}
\end{equation}
where the operator $\convex \{a, b\}$ refers to the convex hull 
of two points $a \in \RE^{m}$ and $b \in \RE^{m} $, i.e.
\begin{displaymath}
\convex \{a , b \} = \{ c \in \RE^{m} : c = \lambda a + 
(1 - \lambda) b, \; \lambda \in [0,1] \} \;.
\end{displaymath}
and consider the following half-space
\begin{displaymath}
\Omega_{\rd} = \{ x \in \RE^{n} : \langle D^{-1} (y_{\rd} - Cx),
y_{\rd} \rangle \leq \mathcal{D}(y_{\rd}, -y_{\rd}) \} \;.
\end{displaymath}
Our first goal is to show that the value of the output $y$
is equal to $y_{\rd}$ whenever $x \in \Omega_{\rd}$.
Assuming $x \in \Omega_{\rd}$ implies
\begin{align*}
\langle D^{-1}(y_{\rd} - Cx), y_{\rd} \rangle & \leq 
\mathcal{D}(y_{\rd}, -y_{\rd}) = \inf_{\rho > 0} 
\dfrac{\varphi(y_{\rd} - \rho y_{\rd}) - \varphi(y_{\rd})}{\rho} \\
& \leq \dfrac{\varphi (\mu y_{\rd}) - \varphi(y_{\rd})}{1 - \mu}
\text{ for all } \mu \in [0,1),
\end{align*}
where we did the change of variables $\rho = 1- \mu$. Because the 
term $1 - \mu$ is positive, we have
\begin{align*}
\langle D^{-1}(y_{\rd} - Cx), (1 - \mu)y_{\rd} \rangle \leq
\varphi(\mu y_{\rd}) - \varphi(y_{\rd}) \text{ for all } \mu \in [0,1]
\end{align*}
Furthermore, each element of $S$ can be represented as
$\sigma = \mu y_{\rd} \in S$ for some $\mu \in [0,1]$, therefore
\begin{displaymath}
\langle D^{-1}(y_{\rd} - Cx), \sigma - y_{\rd} \rangle + 
\varphi(\sigma) - \varphi(y_{\rd}) \geq 0 \text{ for all }
\sigma \in S \;.
\end{displaymath}
That is, $y_{\rd}$  is a solution of the hemivariational 
inequality~\eqref{eq:hemivi} when $x \in \Omega_{\rd}$, and 
considering the uniqueness of solutions, the output $y$ must be 
equal to $y_{\rd}$. 

It remains to show that (even in the presence
of the external perturbation $\nu$), the system state $x$ enters 
the interior of the set $\Omega_{\rd}$ in finite time and 
remains therein for all future time. In terms of the equilibrium
$x_{*}$, we have from~\eqref{eq:ass1} that $x_{*} \in \Omega_{\rd}$.
We will prove that for some $\delta > 0$ small enough, there
exist an ellipsoid $\mathcal{E} = \{ x \in \RE^{n} : 
(x - x_{*})^{\top} P (x - x_{*}) \leq \delta\} \subset \interior
\Omega_{\rd}$ around $x_{*}$ that is attractive and invariant.

%
Considering the dissipation equation~\eqref{eq:eb}, it is clear 
that $\eta = u - \bar{u}$ is well defined, where $\bar{u}$ is the basic
IDA control from Lemma~\ref{lmma:ubIDA}, and $u \in \partial \Phi(y)$.
Then, equation~\eqref{eq:eb} transforms into
\begin{align*}
\dot{H}_2(x) = & \dfrac{1}{2} \begin{bmatrix} (x - x_{*})^{\top} 
& -(u - \bar{u})^{\top} \end{bmatrix} \begin{bmatrix} 
A^{\top} P + PA & P B_{\ru} - C^{\top} \\B_{\ru}^{\top} P - C 
& -(D + D^{\top}) \end{bmatrix} \begin{bmatrix} x - x_{*} \\ 
-(u - \bar{u}) \end{bmatrix} - \\
& (y - y_{\rd})^{\top} (u - \bar{u}) + (x-x_{*})^\top P B_{\rv} \nu,
\end{align*}
where the term $-(y - y_{\rd})^{\top} (u - \bar{u})$ 
is negative for all $y \neq y_{\rd}$ (i.e., for all $x \notin
\Omega_{\rd}$). Indeed, we have from~\eqref{eq:ass1} and Lemma~\ref{lmma:ubIDA} that
\begin{displaymath}
-\langle \bar{u}, y_{\rd} \rangle < \mathcal{D}(y_{\rd}, -y_{\rd}) \;,
\end{displaymath}
and from the definition of subdifferential we have
\begin{align*}
u \in \partial \Phi(y) \Leftrightarrow \varphi(\sigma) - \varphi(y)
\geq \langle u, \sigma - y \rangle \text{ for all } \sigma \in S.
\end{align*}
Specifically, for $\sigma = y_{\rd}$ we obtain
$-\langle u, y - y_{\rd}  \rangle \leq \varphi(y_{\rd}) - \varphi(y)$. Moreover, for all
$y \in S \setminus \{ y_{\rd} \}$ we can write 
$y= \lambda y_{\rd}$ with $\lambda \in [0,1)$. Thus, 
\begin{align*}
-(y - y_{\rd})^{\top}(u - \bar{u}) & \leq \varphi(y_{\rd}) - 
\varphi(\lambda y_{\rd}) - (1 - \lambda) y_{\rd}^{\top} \bar{u} \\
& < \varphi(y_{\rd}) - \varphi(\lambda y_{\rd}) + (1 - \lambda)
\inf_{\mu > 0} \dfrac{\varphi(y_{\rd} - \mu y_{\rd}) -
\varphi(y_{\rd})}{\mu} \\
& \leq  \varphi(y_{\rd}) - \varphi(\lambda y_{\rd}) + (1 - \lambda)
\dfrac{\varphi(y_{\rd} - \mu y_{\rd}) - \varphi(y_{\rd})}{\mu}
 \end{align*}
for all $\mu > 0$. Setting $\mu = 1 - \lambda > 0$ we obtain $-(y - y_{\rd})^{\top}
(u - \bar{u}) < 0$ for all $y \neq y_{\rd}$.

From~\eqref{eq:sysH1:Output} we have that $u$ must satisfy
$u = D^{-1} \left( Cx - y \right)$. Substituting $u$
and $\bar{u}$ in~\eqref{eq:eb} and applying the Lambda 
inequality to the term $(x-x_{*})^{\top} P B_{\rv} \nu$, we have
\begin{align*}
\dot{H}_{2} \leq & -\dfrac{1}{2} w^{\top} R w - 
(y -y_{\rd})^{\top} (u - \bar{u}) + 
(x -x_{*})^{\top} \Lambda (x - x_{*}) + 
\nu^{\top} B_{\rv} P \Lambda^{-1} P B_{\rv} \nu
\end{align*}
where $\Lambda = \Lambda^{\top} > 0$ and
\begin{align*}
w^{\top} & = \begin{bmatrix} (x -x _{^{*}})^{\top} &
(y - y_{\rd})^{\top} D^{-\top} \end{bmatrix} \\
R & = - \begin{bmatrix} A^{\top} P + P A - 
C^{\top} D^{- \top} B_{\ru}^{\top} P - PB_{\ru} D^{-1} C &
PB_{\ru} - C^{\top} D^{- \top} D \\ B_{\ru}^{\top} P - 
D^{\top} D^{-1} C & -(D + D^{\top})
\end{bmatrix}.
\end{align*}
It follows that $-R < 0$, since it is obtained applying the
following non singular congruence transformation
\begin{displaymath}
\begin{bmatrix}
I & 0 \\ -D^{-1} C & I 
\end{bmatrix}
\end{displaymath}
to~\eqref{eq:LMI1}. Now, setting $\Lambda $ in a way that 
\begin{displaymath}
R_{\Lambda} = R - \begin{bmatrix}
\Lambda & 0 \\ 0 & 0 \end{bmatrix} > 0 \;,
\end{displaymath}
we have
\begin{align*}
\dot{H}_2(x) \leq -\dfrac{1}{2} w^{\top} R_{\Lambda} w - 
(y - y_{\rd})^{\top}(u - \bar{u}) + 
\nu^{\top} B_{\rv} P \Lambda^{-1} P B_{\rv} \nu 
\end{align*}
and therefore
\begin{align*}
\dot{H}_2(x) & \leq - \dfrac{1}{2}\lambda_{\min} (R_{\Lambda}) 
\| w \|^{2} + \lambda_{\max} (B_{\rv} P \Lambda^{-1} P B_{\rv}) 
\| \nu \|^{2} \\
& \leq - \dfrac{1}{2} \lambda_{\min} (R_{\Lambda}) \| x - x_{*} 
\|^{2} + \lambda_{\max}(B_{\rv} P \Lambda^{-1} P B_{\rv}) \| 
\nu \|^{2}
\end{align*}
Considering that we are looking for stability of the ellipsoid
$\mathcal{E}$ defined above, we have that, for all $x \notin 
\mathcal{E}$,
\begin{displaymath}
\| x -x_{*} \|^{2} > \dfrac{\delta}{\lambda_{\max}(P)} \;,
\end{displaymath}
and therefore, if $\nu$ satisfies
\begin{displaymath}
\| \nu \|^{2} \leq \dfrac{\delta \lambda_{\min}(R_{\Lambda})}
{2 \lambda_{\max}(P) 
\lambda_{\max}(B_{\rv}^{\top} P \Lambda^{-1} P B_{\rv})} =
\mathcal{B}^{2} \;,
\end{displaymath}
we conclude that $\dot{H}_2 < 0$ for all $x \notin \mathcal{E}$,
i.e. the set $\mathcal{E}$ is attractive and invariant~\cite{khalil2002}.
Now, for passivity of the controller, we have from~\eqref{eq:hemivi} 
that
\begin{displaymath}
\langle u, y \rangle \geq \varphi(y) - \varphi(0) \;.
\end{displaymath}
Therefore, if we choose $\varphi$ such that $\varphi(y) \geq \varphi(0)$ for all $y \in S$,
then the control $u$ will be passive.

Finally, finite-time convergence of the output is obtained
automatically from the proof. Namely, $\mathcal{E} \subset 
\interior \Omega_{\rd}$ together with attractivity and invariance 
of $\mathcal{E}$ implies that there exist a time $t^{*} < \infty$
such that the state will cross the boundary of $\Omega_{\rd}$ and 
will remain inside of $\Omega_{\rd}$ for all $t > t^{*}$.
\end{proof}
Figures~\ref{fig:phaseProof} and~\ref{fig:phaseAnotherSet} show a
picture of the convergence of the term $C x $ to the interior of the
set $\{y_{\rd}\} + N_{S}(y_{\rd})$ in the output space for the sets 
$S = \convex \{0, y_{\rd}\}$ and $S = \convex \{[0,0], [y_{\rd_1}, 0],
[0, y_{\rd_2}], [y_{\rd_1}, y_{\rd_2}] \}$  respectively, for the
case $\varphi = 0$, $D = I_{n}$ and $m = 2$. Note that, $C x - y_{\rd}
\in N_S(y_{\rd})$ is equivalent to $y_{\rd} = \Proj_S(C x)$ and from~\eqref{eq:yProj} we obtain $y = y_{\rd}$.
\begin{figure}[!htb]
\begin{center}
 \includegraphics[width=0.5\textwidth]{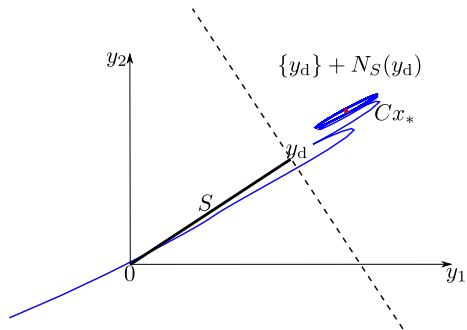}
 \caption{Trajectory of $C x$ converging to interior of $\{y_d\} +
  N_S(y_{\rd})$ with the multivalued control $u \in N_S(y)$ and 
  $S = \convex \{ 0, y_{\rd} \}$. This implies that $y$ converges to $y_\rd$.}
 \label{fig:phaseProof}
\end{center}
\end{figure}
\begin{rmark}
From Figure~\ref{fig:phaseProof} it is possible to see that, if
$x_{*}$ satisfies the condition~\eqref{eq:ass1} for $y_{\rd} \in S$,
then we can achieve robust output regulation for any other desired 
value $\bar{y}_{\rd}$ in the relative interior of $S$ by redefining 
the set $S$  to $\bar{S} = \convex \{0, \bar{y}_{\rd} \}$. Moreover, 
in a more general setting, condition~\eqref{eq:ass1} allows us to 
attack the problem of robust tracking in the following way. Let 
$y_{\rd}(t)$ be the desired reference signal. If, for all values of 
the function $y_{\rd}: \RE \to \RE^{m}$, condition~\eqref{eq:ass1} 
is satisfied together with the bound in $\nu(t)$, then robust output 
tracking is possible as shown in Example~1 below. 
\end{rmark}
\begin{rmark}
It is worth to note that a similar result can be obtained (with
possibly different bounds in the external perturbation and 
different condition in $x_{*}$), if we change the form of the 
set $S$. For example, for $\Phi(y) = \psi_S(y)$ a possible 
set $S$ could be as the one given in Figure~\ref{fig:phaseAnotherSet}, 
where the point $y_{\rd}$ is still in the boundary of $S$ and the
normal cone to $S$ at $y_{\rd}$ has no empty interior, the details 
are left to the reader.
\end{rmark}
\begin{figure}[!htb]
\begin{center}
 \includegraphics[width=0.5\textwidth]{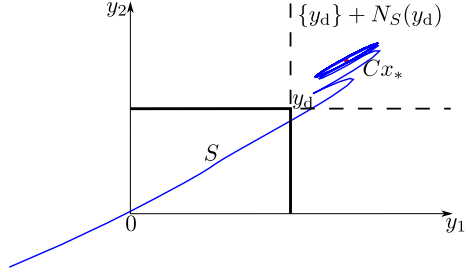}
 \caption{Trajectory $C x$ converging to interior of $\{y_d\} +
  N_S(y_{\rd})$ with the multivalued control $u \in N_S(y)$ and 
  $S = \convex \{[0,0], [y_{\rd_1}, 0],
[0, y_{\rd_2}], [y_{\rd_1}, y_{\rd_2}] \}$.}
 \label{fig:phaseAnotherSet}
\end{center}
\end{figure}

\section{Implementation issues and examples} \label{sec:Implementation}

\subsection{Regularization}
Up to this point, we have shown that whenever $u \in \partial \Phi(y)$
the membership of $y$ to the set $S$, together with robust output
regulation are assured. Our next step is to develop a way to recover 
an explicit expression for the values of the control input $u$ in
terms of the measured output $y$ and independent of the system
parameters and state. 

Note that exact values of input $u$ can be computed by solving 
the hemivariational inequality~\eqref{eq:hemivi} at each time instant
$t$, and making use of~\eqref{eq:sysH1:Output}, but this approach
requires knowledge of the system parameters and state $x$.
  
As another alternative, it is worth noting that the approach of
continuous selections does 
not yield the desired features. For example, in the 
case of $\varphi = 0$, a continuous selection of the multifunction 
$\partial \Phi(\cdot) = N_S(\cdot)$ is $u = 0$, 
(in fact $u = 0$ is the unique continuous selection). However,
with that control the storage function of~\eqref{eq:sysH1} is given
by $H_1$ with minimum at $\bar{x}$ and consequently, neither robust
output regulation nor $y \in S$ properties are (in general) 
obtained. Similar results can be obtained when $\varphi \in 
\mathcal{C}^{1}$, since $u = \nabla \varphi (\cdot)$ is always 
a continuous selection of $\partial \Phi(\cdot)$.

Instead of looking for continuous selections of $\partial
\Phi(\cdot)$, we are going to focus on a regularization of $\graph
\partial \Phi$ in the sense used in~\cite{miranda2014}. More
precisely,
\begin{equation} \label{eq:regControlMult}
\tilde{u} - \nabla \varphi(y) \in N_S(y - 
\varepsilon \left[\tilde{u} - \nabla \varphi (y) \right] )
\end{equation}
is a regularization of the inclusion $u \in \partial \Phi(y)$.
Namely, note that for $\varepsilon = 0$ we recover
$\tilde{u} \in \partial \Phi(y)$ (because $\partial \Phi(y) = 
\nabla \varphi (y) + N_S(y)$). Moreover, with the previous 
definition we are allowing outputs $y$ not necessarily in $S$.
Instead we now require $y \in \{ \varepsilon \left[ \tilde{u} - 
\nabla \varphi (y) \right] \} + S$.

The well-posedness of inclusion~\eqref{eq:regControlMult} together 
with a single valued expression for $\tilde{u}$ are established below 
in Theorem~\ref{th:reg}. The following Lemma will be useful when proving it.

\begin{lemma} \label{lma:Contraction}
The map $f: \RE^{m} \to \RE^{m}$ given by
\begin{displaymath}
f(z) := (I + \varepsilon D^{-1})^{-1} z \;,
\end{displaymath}
where $D^{-1} + D^{-T} > 0$, is a contraction for all $\varepsilon > 0$.

\end{lemma}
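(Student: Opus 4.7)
The plan is to exploit linearity of $f$ and reduce the contraction claim to an operator-norm bound. Since $f(z) - f(z') = (I + \varepsilon D^{-1})^{-1}(z - z')$, what I need to show is that $M := I + \varepsilon D^{-1}$ is invertible and that $\|M^{-1}\|_{\mathrm{op}} \leq k$ for some constant $k < 1$ (depending on $\varepsilon$ and $D$ but not on $z,z'$).

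First I would establish invertibility of $M$ and, simultaneously, a lower bound of the form $\|Mw\| \geq c\,\|w\|$ with $c>1$, since this is equivalent to the desired upper bound on $M^{-1}$. The key algebraic identity is
\begin{equation*}
\|Mw\|^{2} = w^{\top}(I + \varepsilon D^{-\top})(I + \varepsilon D^{-1})w
= \|w\|^{2} + \varepsilon\, w^{\top}(D^{-1} + D^{-\top})w + \varepsilon^{2}\|D^{-1}w\|^{2}.
\end{equation*}
By hypothesis $D^{-1} + D^{-\top} > 0$, which is a symmetric positive definite matrix, so Rayleigh's inequality gives $w^{\top}(D^{-1} + D^{-\top})w \geq 2\alpha\,\|w\|^{2}$ with $\alpha := \tfrac{1}{2}\lambda_{\min}(D^{-1} + D^{-\top}) > 0$. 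Dropping the nonnegative $\varepsilon^{2}\|D^{-1}w\|^{2}$ term, I obtain
\begin{equation*}
\|Mw\|^{2} \geq (1 + 2\varepsilon\alpha)\,\|w\|^{2}.
\end{equation*}

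This inequality handles everything at once. Invertibility of $M$ follows because $Mw = 0$ forces $w = 0$, so $M$ is injective on $\RE^{m}$ and hence bijective. Taking square roots and substituting $w = M^{-1}v$ gives $\|v\| \geq \sqrt{1 + 2\varepsilon\alpha}\,\|M^{-1}v\|$, i.e.
\begin{equation*}
\|f(z) - f(z')\| = \|M^{-1}(z-z')\| \leq \frac{1}{\sqrt{1 + 2\varepsilon\alpha}}\,\|z - z'\|,
\end{equation*}
and the Lipschitz constant $1/\sqrt{1 + 2\varepsilon\alpha}$ is strictly less than $1$ for every $\varepsilon > 0$, which is exactly the contraction property.

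There is no real obstacle here; the only subtle point is to recognize that the hypothesis is stated on the \emph{symmetric part} of $D^{-1}$ (rather than on $D^{-1}$ itself, which need not be symmetric), which is precisely what makes the cross term in $\|Mw\|^{2}$ strictly positive and gives the uniform gap. The $\varepsilon^{2}\|D^{-1}w\|^{2}$ term is irrelevant for the bound and can be discarded.
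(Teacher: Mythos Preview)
Your argument is correct and follows essentially the same route as the paper: both expand $\|(I+\varepsilon D^{-1})w\|^{2}$, isolate the cross term $\varepsilon\,w^{\top}(D^{-1}+D^{-\top})w$, and invoke Rayleigh's inequality on the symmetric part to obtain a Lipschitz constant strictly below~$1$. The only cosmetic differences are that you discard the nonnegative $\varepsilon^{2}\|D^{-1}w\|^{2}$ term (the paper keeps it, yielding the slightly sharper constant $1/\sqrt{1+\varepsilon\lambda_{\min}(D^{-1}+D^{-\top})+\varepsilon^{2}\lambda_{\min}(D^{-\top}D^{-1})}$) and that you make the invertibility of $M$ explicit, which the paper takes for granted.
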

\begin{proof}
Defining 
\begin{displaymath}
\zeta = f(z) \;,
\end{displaymath}
we have $(I + \varepsilon D^{-1}) \zeta = z$ and direct computation gives
\begin{align*}
\| z_1 - z_2 \|^{2} &= \left[ (I + \varepsilon D^{-1})(\zeta_1 - \zeta_2) \right]^{\top} \left[ (I + \varepsilon D^{-1})(\zeta_1 - \zeta_2) \right] \\
                    &= \| \zeta_1 - \zeta_2 \|^{2} + \varepsilon (\zeta_1 - \zeta_2)^{\top}\left[ D^{-1} + D^{-\top} \right] (\zeta_1 - \zeta_2) \\ 
                     & \: {} + \varepsilon^{2}(\zeta_1 - \zeta_2)^{\top} D^{-\top}D^{-1} (\zeta_1 - \zeta_2) \\
                    &\geq \| \zeta_1 - \zeta_2 \|^{2} + \varepsilon \lambda_{\min} \left( D^{-1} + D^{-\top} \right) \| \zeta_1 - \zeta_2 \|^{2} \\
                     & \: {} + \varepsilon^{2} \lambda_{\min} \left( D^{-\top} D^{-1} \right) \| \zeta_1 - \zeta_2 \|^{2} \;.
\end{align*}
Therefore,
\begin{displaymath}
\| f(z_1) - f(z_2) \| \leq \dfrac{1}{\sqrt{1 + \varepsilon
\lambda_{\min} \left( D^{-1} + D^{-\top} \right) + \varepsilon^{2} 
\lambda_{\min} \left( D^{-\top} D^{-1} \right)}} \| z_1 - z_2 \| \;.
\end{displaymath}
\end{proof}

\begin{theorem} \label{th:reg}
Let $\varphi$ be a strictly convex, lower semicontinuous function  
that is $\mathcal{C}^{1}$ and satisfies
\begin{enumerate}
\item[•] $\varphi(y) \geq \varphi (0)$ for all $y \in S$.

\item[•]  $\nabla \varphi : \RE^{m}  \to \RE^{m}$ is Lipschitz continuous with constant $L$ such that
\begin{displaymath}
L < \lambda_{\min} \left( \dfrac{ D^{-1} + D^{-\top} } {2} \right) \;.
\end{displaymath} 
\end{enumerate}
Then, for $\varepsilon > 0$ sufficiently small, the regularized
control $\tilde{u}$ can be expressed as:
\begin{equation} \label{eq:regControlSing}
\tilde{u} = \dfrac{y - \Proj_S(y)}{\varepsilon} + \nabla \varphi
(y)
\end{equation}
Furthermore, $\tilde{u}$ is passive respect to $y$.
\end{theorem}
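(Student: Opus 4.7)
My plan has three steps: (i) rewrite the regularized inclusion as an explicit single-valued formula for $\tilde{u}$ in terms of $y$ alone; (ii) prove that combining this formula with the output equation $y = Cx - D\tilde{u}$ yields an implicit equation for $y$ that admits a unique solution, via a Banach contraction built on Lemma~\ref{lma:Contraction}; and (iii) verify passivity directly from the normal-cone structure.

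For step (i), I set $w := \tilde{u} - \nabla\varphi(y)$ and $z := y - \varepsilon w$, so the inclusion reads $w \in N_S(z)$. The resolvent identity for the normal cone — $w \in N_S(z)$ if and only if $z = \Proj_S(z + \varepsilon w)$ — combined with the trivial identity $z + \varepsilon w = y$ gives $z = \Proj_S(y)$, hence $w = (y - \Proj_S(y))/\varepsilon$, which is exactly formula~\eqref{eq:regControlSing}.

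For step (ii), I substitute~\eqref{eq:regControlSing} into the output equation and, after grouping $y$-terms and scaling by $\varepsilon D^{-1}$, arrive at
\[
y \;=\; f\bigl(\Proj_S(y) - \varepsilon\nabla\varphi(y) + \varepsilon D^{-1}Cx\bigr),
\]
where $f$ is the map from Lemma~\ref{lma:Contraction} with contraction constant $\kappa(\varepsilon)<1$. Nonexpansiveness of $\Proj_S$ and the $L$-Lipschitz bound on $\nabla\varphi$ make the inner map $(1+\varepsilon L)$-Lipschitz in $y$, so the overall right-hand side is Lipschitz in $y$ with constant at most $\kappa(\varepsilon)(1+\varepsilon L)$. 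A first-order expansion in $\varepsilon$ shows that this constant is strictly below $1$ for all sufficiently small $\varepsilon>0$ precisely when $2L < \lambda_{\min}(D^{-1}+D^{-\top})$, which is the standing hypothesis; Banach's fixed-point theorem then delivers a unique $y$, and therefore a unique $\tilde{u}$, for every state $x$. This quantitative calibration — pinpointing the hypothesis on $L$ as exactly what controls the sign of the leading $\varepsilon$-coefficient — is the technical heart of the argument and the main obstacle I anticipate.

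For step (iii), the inclusion forces $z \in S$, and testing it against $\sigma = 0 \in S$ yields $\langle \tilde{u}-\nabla\varphi(y),\, z\rangle \geq 0$. Combining this with the identity $y = z + \varepsilon w$ gives
\[
\langle \tilde{u}-\nabla\varphi(y),\, y\rangle \;\geq\; \varepsilon\|\tilde{u}-\nabla\varphi(y)\|^{2} \;\geq\; 0.
\]
Adding the convexity estimate $\langle \nabla\varphi(y),\, y\rangle \geq \varphi(y) - \varphi(0) \geq 0$ — valid by the hypothesis $\varphi(y)\geq\varphi(0)$ on $S$, which is transferred to the regularized trajectory because $\Proj_S(y) \in S$ and $y$ stays within $O(\varepsilon)$ of $S$ — yields $\langle \tilde{u},\, y\rangle \geq 0$, i.e.\ the claimed passivity.
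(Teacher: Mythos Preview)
Your proof is correct and follows essentially the same three-part structure as the paper's: the projection characterization of the normal cone to extract~\eqref{eq:regControlSing}, the fixed-point equation $y=(I+\varepsilon D^{-1})^{-1}[\Proj_S(y)-\varepsilon\nabla\varphi(y)+\varepsilon D^{-1}Cx]$ combined with Lemma~\ref{lma:Contraction} and the derivative test at $\varepsilon=0$ for well-posedness, and the test $\sigma=0$ in the normal-cone inequality for passivity. The only place you are slightly less careful than the paper is the last step: the hypothesis $\varphi(y)\geq\varphi(0)$ is stated only on $S$, and a non-strict inequality does not survive an $O(\varepsilon)$ perturbation of $y$ off $S$; the paper closes this by invoking the \emph{strict} convexity of $\varphi$ to obtain $\langle\nabla\varphi(y),y\rangle>\varphi(y)-\varphi(0)\geq 0$ at $\varepsilon=0$, so that the strict margin persists for $\varepsilon$ small --- you have this hypothesis available and should use it rather than the continuity handwave.
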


\begin{proof}
From~\eqref{eq:regControlMult} we have that for all $\sigma \in S$ the 
following holds:
\begin{equation} \label{eq:regVI}
0 \geq \langle \tilde{u} - \nabla \varphi(y), \sigma - y +
\varepsilon \left[ \tilde{u} - \nabla \varphi (y) \right] \rangle \;.
\end{equation}
Multiplying by $\varepsilon > 0 $ and adding and subtracting $y$ on
the left-hand side of the inner product we obtain
\begin{displaymath}
0 \geq \langle y - y + \varepsilon \left[ \tilde{u} - \nabla
\varphi (y) \right], \sigma - y + \varepsilon \left[ \tilde{u} -
\nabla \varphi (y) \right] \;.
\end{displaymath}
Therefore,
\begin{displaymath}
y - \varepsilon \left[ \tilde{u} - \nabla \varphi (y) \right]
= \Proj_S(y) \;,
\end{displaymath}
from which we obtain~\eqref{eq:regControlSing}. Now we show that
the interconnection of the plant~\eqref{eq:sysH1:Dyn}--\eqref{eq:sysH1:Output} 
with the regularized control~\eqref{eq:regControlSing} is well-posed. It is easy to see that 
well-posedness of the closed-loop system is equivalent to proving 
that, for any state $x \in \RE^{n}$, the equations
\begin{align*}
\tilde{u} &  = \dfrac{y - \Proj_S(y)}{\varepsilon} + \nabla 
\varphi (y), \\
\tilde{u} & = D^{-1} \left( C x - y \right),
\end{align*}
have a unique solution. Proceeding with the substitution of the 
second equation and after some manipulations we have
\begin{displaymath}
y = \left( I + \varepsilon D^{-1} \right)^{-1} \left[ \Proj_S(y)
- \varepsilon \nabla \varphi (y) + \varepsilon D^{-1} C x \right] 
=  \left( f \circ g \right)  \left( y \right) \;,
\end{displaymath}
with $f$ as in Lemma~\ref{lma:Contraction} and $g: \RE^{m} \to \RE^{m}$
given by
\begin{displaymath}
g(z) = \Proj_S(z) - \varepsilon \nabla \varphi (z) + \varepsilon D^{-1}
C x \;.
\end{displaymath}
We argue that the composition mapping $f \circ g$ is a contraction for
$\varepsilon$ sufficiently small. Indeed, making use of Lemma~\ref{lma:Contraction} we have that
\begin{align*}
\big\| \left( f \circ g \right) \left( y_1 \right) - 
\left(f \circ g \right) \left( y_2 \right) \big\| \leq \dfrac{1}
{\beta(\varepsilon)} \|g(y_1) - g(y_2) \| \leq \dfrac{1 + \varepsilon L}
{\beta(\varepsilon)} \| y_1 - y_2 \|
\end{align*}
where $\beta(\varepsilon) = \left[ 1 + \varepsilon \lambda_{\min} 
\left( D^{-1} + D^{-\top} \right) + \varepsilon^{2} \lambda_{\min} 
\left( D^{-\top} D^{-1} \right) \right]^{1/2}$. Note that the term 
$ \frac{1 + \varepsilon L}{\beta(\varepsilon)}$ is equal to $1$
for $\varepsilon = 0$ and 
\begin{displaymath}
\dfrac{d}{d \varepsilon} \left( \dfrac{1 + \varepsilon L}
{\beta(\varepsilon)} \right) \bigg|_{\varepsilon = 0}
= L - \lambda_{\min} \left( \dfrac{ D^{-1} + D^{-\top}}{2} \right) < 0 \;,
\end{displaymath}
i.e. the term $\frac{1 + \varepsilon L}{\beta(\varepsilon)}$ is
strictly decreasing in a neighbourhood of $\varepsilon = 0$ and 
thus it is less than $1$ for $\varepsilon$ sufficiently small.
Therefore, $f \circ g$ is a contraction and the interconnection is
well-posed. It only rests to prove the passivity property of 
$\tilde{u}$. From~\eqref{eq:regVI} we have for $\sigma = 0 \in S$
\begin{displaymath}
\langle \tilde{u}, y \rangle \geq \langle \nabla 
\varphi (y), y \rangle + \varepsilon \| \tilde{u} - \nabla 
\varphi (y) \|^{2} \;.
\end{displaymath}
Note that for $\varepsilon = 0$ we have $y \in S$ and
from the strictly convexity assumption (see e.g.~\cite[p. 183]{hiriart1993}),
\begin{displaymath}
\langle \nabla \varphi (y), y \rangle > \varphi(y) - \varphi(0) \geq 0 \text{ for all } y \in S \;.
\end{displaymath}
In other words we have
\begin{displaymath}
\lim_{\varepsilon \downarrow 0} \langle \nabla 
\varphi (y), y \rangle + \varepsilon \| \tilde{u} - \nabla 
\varphi (y) \|^{2} > 0 \;.
\end{displaymath}
Consequently, $\langle \tilde{u}, y \rangle \geq 0$ for some 
$\varepsilon > 0$ sufficiently small. 
\end{proof}

\begin{rmark}
Note that Theorem~\ref{th:reg} is still true if we change the first
assumption by $\varphi(y) \geq \varphi(0) $ for all $y \in \Dom \varphi$ with $\varphi$ a convex function.
\end{rmark}

\subsection{Example 1}

Consider the circuit described by the diagram of
Figure~\ref{fig:ExCirc_Circ}. We wish to regulate the outputs $y_1$ 
and $y_2$ to a desired value $y_d$. 
\begin{figure}[!hb]
\begin{center}
 \includegraphics[width=0.25\textwidth]{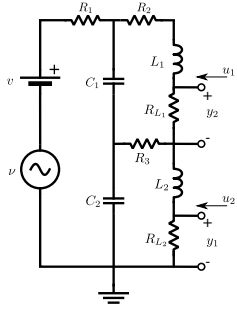}
 \caption{Circuit diagram of Example~1, where the goal is
  to regulate the voltage at the outputs $y_1$ and $y_2$.}
 \label{fig:ExCirc_Circ}
\end{center}
\end{figure}
Taking as state variables the fluxes in inductors and charges in 
capacitors, we have the following state-space representation:
\begin{subequations} \label{eq:Circuit}
\begin{align}
\dot{x} & = \begin{bmatrix}
- \frac{1}{R_1 C_1} & -\frac{1}{L_1} & - \frac{1}{R_1 C_2} & 0 \\
\frac{1}{C_1} & - \frac{R_2 + R_3 + R_{L_1}}{L_1} & 0 & 
\frac{R_3}{L_2} \\
-\frac{1}{R_1 C_1} & 0 & -\frac{1}{R_1 C_2} & -\frac{1}{L_2} \\
0 & \frac{R_3}{L_1} & \frac{1}{C_2} & -\frac{R_3 + R_{L_2}}{L_2}
\end{bmatrix} x + \begin{bmatrix}
0 & 0 \\
R_{L_1} & 0 \\
0 & 0 \\
0 & R_{L_2}
\end{bmatrix} u + \begin{bmatrix}
\frac{1}{R_1} \\ 0 \\ \frac{1}{R_1} \\ 0 
\end{bmatrix} v \\
y & = \begin{bmatrix}
0 & \frac{R_{L_1}}{L_1} & 0 & 0 \\
0 & 0 & 0 & \frac{R_{L_2}}{L_2}
\end{bmatrix} x + \begin{bmatrix}
R_{L_1} & 0 \\
0 & R_{L_2}
\end{bmatrix} u
\end{align}
\end{subequations}
where $x = \begin{bmatrix} x_1 & x_2 & x_3 & x_4 \end{bmatrix}^{\top}$
are the charge in capacitor $C_1$, flux in inductor $L_1$, charge in
capacitor $C_2$ and flux in inductor $L_2$, respectively,
$u = \begin{bmatrix} u_1 & u_2 \end{bmatrix}^{\top}$ are the control
inputs (currents) and $y = \begin{bmatrix} y_1 & y_2 \end{bmatrix}
^{\top}$ are the voltages in resistances $R_{L_1}$ and $R_{L_2}$,
respectively. Assume that we want to control the outputs 
to $y_d = \begin{bmatrix} 1 & f(t) \end{bmatrix}^{\top}$, where 
$f(t)$ is a sawtooth wave function with amplitude $0.5$ and frequency
of $2$ Hz.

The system is passive because it is the result of the interconnection 
of passive elements. Values of system parameters are $R_1 = R_2 =
R_3 = 1 \Omega$, $R_{L_1} = 2 \Omega$, $R_{L_2} = 3 \Omega$, $L_1 =
1$H, $L_2 = 2$H, $C_1 = 1$F, $C_2 = 3$F, $v = 10 + 50 \sin(t) 
\sign(\sin( \pi t))$. Taking the convex function $\varphi = 0$, 
simple algebra shows that condition~\eqref{eq:ass1} is equal to
\begin{displaymath}
 \langle D^{-1} (y_{\rd} - C x_{*}), y_{\rd} \rangle = -4 -4 f(t)
 + \dfrac{5}{6} f(t)^{2} \;,
\end{displaymath}
which is negative for values of $f(t) \in (-0.849, 5.649)$. The
implemented control takes the form~\eqref{eq:regControlSing} with
$\varepsilon = 1 \times 10^{-3}$ and $S$ the convex, time-varying set
\begin{displaymath}
S(t) = \convex \left\lbrace \begin{bmatrix} 0 \\ 0 \end{bmatrix},
\begin{bmatrix} 1 \\ f(t) \end{bmatrix} \right\rbrace \;.
\end{displaymath}

Figure~\ref{fig:ExCirc_y} shows the convergence of the output to the 
desired reference, even in the presence of the external perturbation 
$v$. Moreover, is easy to see that the condition $y \in S$ is
satisfied. The computed control input is shown in Figure~\ref{fig:ExCirc_u}.

%
\begin{figure}[!htb]
\begin{center}
 \includegraphics[width=1\textwidth]{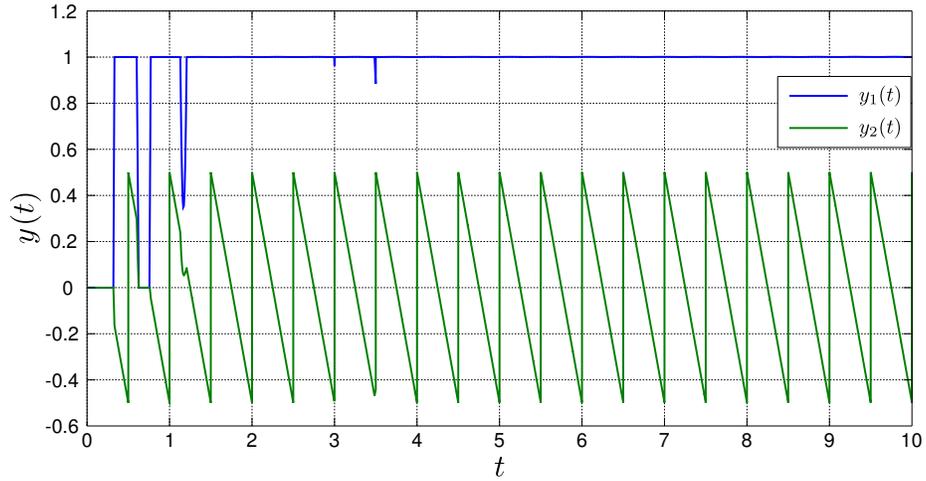}
 \caption{Output response of plant~\eqref{eq:Circuit} with
 regularized control~\eqref{eq:regControlSing}. The picture shows  
 convergence to the desired output $y_{\rd}$ (with $\varepsilon = 1
 \times 10^{-3}$ and $\varphi = 0$) subject to the perturbation 
 $v(t) = 10 + 50 \sin(t) \sign(\sin( \pi t)) $.}
 \label{fig:ExCirc_y}
\end{center}
\end{figure}
\begin{figure}[!htb]
\begin{center}
 \includegraphics[width=1\textwidth]{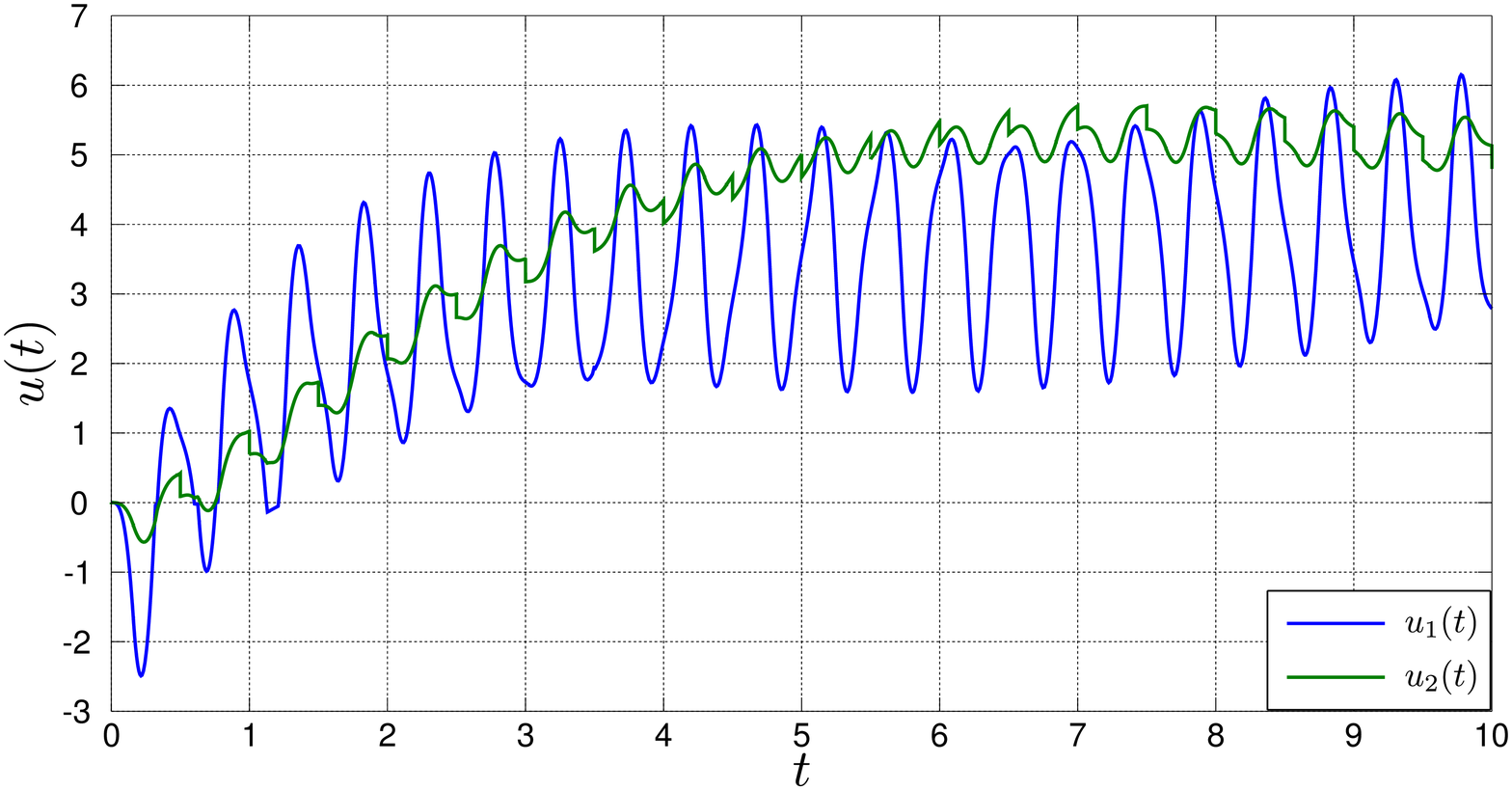}
 \caption{Time trajectory of regularized control~\eqref{eq:regControlSing} 
  with $\varepsilon = 1 \times 10^{-3}$ and
 $\varphi = 0$.}
 \label{fig:ExCirc_u}
\end{center}
\end{figure}

\subsection{Example 2}

Consider the following affine system
\begin{subequations} \label{eq:AbstactPlant}
\begin{align}
\dot{x} &= A x + B_{\ru} u_1 + B_{\rv} v \\
y_1 &= Cx + Du_1
\end{align}
\end{subequations}
with
\begin{align*}
A & = \begin{bmatrix}
-3.7036 & 1.9043 & -0.9735 & 0.4164 \\
-0.2421 & -5.1187 & -0.0478 & 0.0269 \\
-0.9915 & -1.0461 & -5.5232 & -0.5318 \\
0.4376 & 2.1467 & -0.5948 & -3.6545
\end{bmatrix}, 
& B_{\ru} = & \begin{bmatrix}
-0.6918 & -1.4410 \\
0.8580 & 0.5711 \\
1.2540 & -0.3999 \\
-1.5937 & 0.6900
\end{bmatrix}, \\
C & = \begin{bmatrix}
0.0652 & 0.4889 & 0.6820 & 0.9166 \\
0.7134 & 0.6677 & 0.1996 & 0.8659
\end{bmatrix}, 
& B_{\rv} = & \begin{bmatrix}
0.8147 & 1.4345 \\
0.9058 & 2.5464 \\
-0.1270 & 0 \\
0.9134 & -1.0453 \\
\end{bmatrix}, \\
D & = \begin{bmatrix}
1.0823 & 0.3899 \\ 
-0.1315 & 0.088
\end{bmatrix} \;.
\end{align*}
where the external perturbation signal $v(t)$ is decomposed
as
\begin{equation} \label{eq:disturbance}
v(t) = \begin{bmatrix} 4 \\ 0 \end{bmatrix} + \begin{bmatrix}
f_1(t) \\ f_2(t) \end{bmatrix}
\end{equation}
with $f_1(t)$ a sinusoidal function with amplitude $2$ and
frequency of 10 Hz and $f_2(t)$ corresponds to a sawtooth wave
with amplitude $3$ and frequency of $\pi$ Hz. Suppose that we want 
to regulate the output to the set-point $y_{\rd} = \begin{bmatrix} -1 
& 2 \end{bmatrix}$.

Let us verify the assumptions of Theorem~\ref{th:main}. The 
equilibrium point $x_{*}$ is 
\begin{displaymath}
 x_{*} = \begin{bmatrix}  2.7809 & 0.1184 & -0.2779 &  0.4877
\end{bmatrix}^{\top} \;.
\end{displaymath}
and it satisfies
\begin{displaymath}
\langle D^{-1} (y_{\rd} - C x_{*}), y_{\rd} \rangle = -9.2810 \;.
\end{displaymath}
Taking, for example, the convex function $\varphi(y) = \log \left( 
e^{y_1} + e^{y_2} \right)$, which is proper and $\mathcal{C}^{1}$, 
we have that
\begin{displaymath}
\mathcal{D} \varphi (y_{\rd}, -y_{\rd}) = - \langle \nabla \varphi
(y_{\rd}), y_{\rd} \rangle = -1.8577 \;.
\end{displaymath}
Condition~\eqref{eq:ass1} is satisfied. Using the SDPT3 software 
to solve~\eqref{eq:LMI1} we obtain
\begin{displaymath}
P = \begin{bmatrix}
1.8765 & 1.8706 & -0.5249 & 1.3338 \\
1.8706 & 3.8984 & -0.4599 & 0.9207 \\
-0.5249 & -0.4599 & 2.4211 & 0.4920 \\
1.3338 & 0.9207 & 0.4920 & 2.0056
\end{bmatrix} \;,
\end{displaymath}
which is positive definite with eigenvalues in $\{ 0.2296, 1.5399,
2.7431, 5.6889 \}$. Figure~\ref{fig:Ex1_y} shows the output response
for a regularized control $\tilde{u}$ with $\varepsilon = 1 \times
10^{-4}$, where finite time convergence toward the desired set-point
can be verified despite the external parametric disturbances of the 
system. Control and state trajectories are shown in Figures~\ref{fig:Ex1_u}
and~\ref{fig:Ex1_x}, respectively.
\begin{figure}[!htb]
\begin{center}
 \includegraphics[width=1\textwidth]{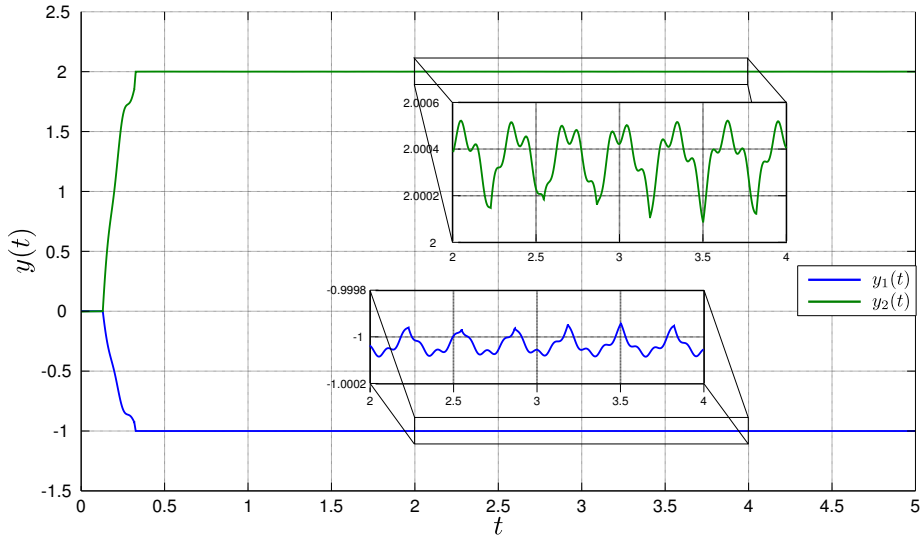}
 \caption{Output's time trajectory for plant~\eqref{eq:AbstactPlant} 
 showing the convergence to the desired value $y_{\rd} = [-1 \quad
  2]^{\top}  $.}
  \label{fig:Ex1_y}
\end{center}
\end{figure}
\begin{figure}[!htb]
\begin{center}
 \includegraphics[width=1\textwidth]{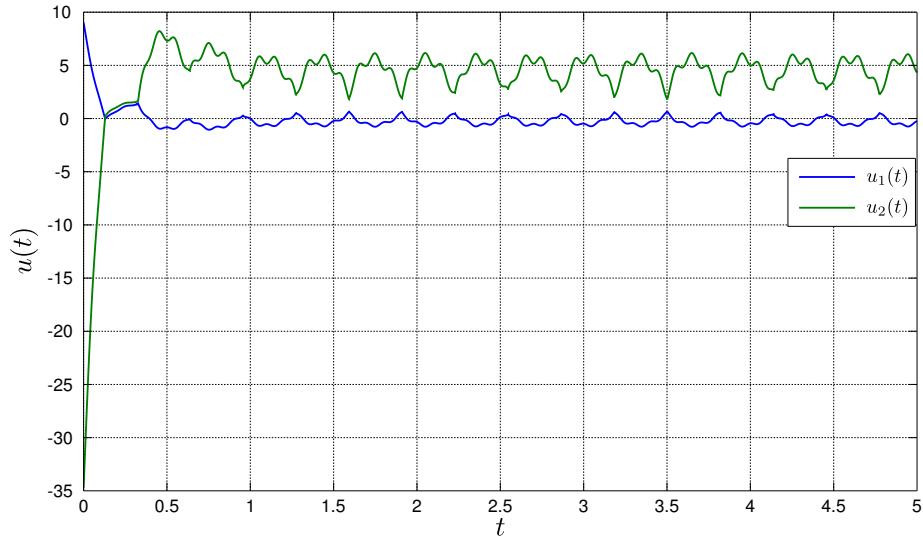}
 \caption{Time trajectory of regularized control~\eqref{eq:regControlSing}
  applied to the plant~\eqref{eq:AbstactPlant} with $\varepsilon = 1 \times 10^{-4}$
  and $\varphi = \log (e^{y_1} + e^{y_2})$.}
 \label{fig:Ex1_u}
\end{center}
\end{figure}
\begin{figure}[!htb]
\begin{center}
 \includegraphics[width=1\textwidth]{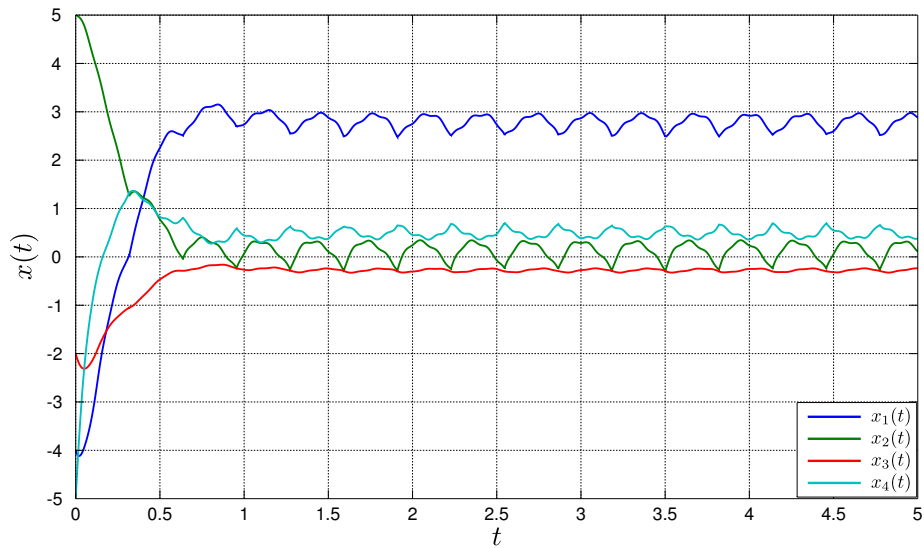}
 \caption{State's time trajectories of regulated plant~\eqref{eq:AbstactPlant} 
  with the regularized control~\eqref{eq:regControlSing} (with
  $\varepsilon = 1 \times 10^{-4}$ and $\varphi = \log (e^{y_1} +  e^{y_2})$) and
  perturbation~\eqref{eq:disturbance}.}
 \label{fig:Ex1_x}
\end{center}
\end{figure}

\section{Conclusions} \label{sec:Conclusions}

This note presents an extension (for the $m$-dimensional case) of the
multivalued control presented in~\cite{miranda2014}. Moreover, more
general multivalued functions of the form $u \in \partial \Phi(y)$ 
are considered, assuring finite time convergence together with,
robust output regulation in the face of parametric and
external (bounded) disturbances.

The effect of the multivalued control relies directly on the
dissipation term modifying the rate of convergence of the storage
function $H_2$ to $x_{*}$ and leaving without change the 
interconnection matrix $J$.

Between the main assumptions considered, the fact that $D$ is 
invertible plays an essential role. A research line is the case of no
$D$ (i.e. $y = C x$).

The implemented control~\eqref{eq:regControlSing} acts in fact as a
high gain controller when $y \notin S$ and coincides with the
continuous selection of $\partial \Phi(y)$ when $y \in S$. However,
since the output contains a feedthrough component of the input, the high gain
does not result in arbitrary large controls. That is, the control converges to
a bounded, well-defined value as $\varepsilon \to 0$. It is 
worth noting that the resulting controller is passive and independent
of the system parameters and of the system state.

The well-suited structure of Port-Hamiltonian systems together with
passivity opens the opportunity to investigate the robust output
regulation problem in the nonlinear setting.

\bibliographystyle{plain}
\bibliography{RobReg}

\begin{thebibliography}{10}

\bibitem{acary2008}
V.~Acary and B.~Brogliato.
\newblock {\em Numerical Methods for Nonsmooth Dynamical Systems}.
\newblock Springer-Verlag, Berlin, 2008.

\bibitem{brogliato2004}
B.~Brogliato.
\newblock Absolute stability and the lagrange–dirichlet theorem with monotone
  multivalued mappings.
\newblock {\em Systems and Control Letters}, 51(5):343 -- 353, 2004.

\bibitem{brogliato2006}
B.~Brogliato, A.~Daniilidis, C.~Lemaréchal, and V.~Acary.
\newblock On the equivalence between complementarity systems, projected systems
  and differential inclusions.
\newblock {\em Systems and Control Letters}, 55(1):45 -- 51, 2006.

\bibitem{brogliato2011}
B.~Brogliato and D.~Goeleven.
\newblock Well-posedness, stability and invariance results for a class of
  multivalued lur’e dynamical systems.
\newblock {\em Nonlinear Analysis: Theory, Methods and Applications}, 74(1):195
  -- 212, 2011.

\bibitem{brogliato2013}
B.~Brogliato and D.~Goeleven.
\newblock {Existence, uniqueness of solutions and stability of nonsmooth
  multivalued Lur'e dynamical systems}.
\newblock {\em Journal of Convex Analysis}, 20(3):881--900, 2013.

\bibitem{castanos2009}
F.~Casta\~{n}os and R.~Ortega.
\newblock {Energy-balancing passivity-based control is equivalent to
  dissipation and output invariance}.
\newblock {\em Systems \& Control Letters}, 58(8):553--560, August 2009.

\bibitem{facchinei2003}
F.~Facchinei and J.S. Pang.
\newblock {\em Finite-Dimensional Variational Inequalities and Complementarity
  Problems}.
\newblock Number V. 1 in Finite-dimensional Variational Inequalities and
  Complementarity Problems. Springer, 2003.

\bibitem{filippov1988}
A.F. Filippov and F.M. Arscott.
\newblock {\em Differential Equations with Discontinuous Righthand Sides:
  Control Systems}.
\newblock Mathematics and its Applications. Springer, 1988.

\bibitem{goeleven2003}
D.~Goeleven, D.~Motreanu, Y.~Dumont, and M.~Rochdi.
\newblock {\em Variational and Hemivariational Inequalities - Theory, Methods
  and Applications: Volume I: Unilateral Analysis and Unilateral Mechanics}.
\newblock Nonconvex Optimization and Its Applications. Springer, 2003.

\bibitem{hiriart1993}
J.~B. Hiriart-Urruty and C.~Lemar{\'e}chal.
\newblock {\em Convex Analysis and Minimization Algorithms I}.
\newblock Springer-Verlag, New York, 1993.

\bibitem{khalil2002}
H.K. Khalil.
\newblock {\em Nonlinear Systems}.
\newblock Prentice Hall, third edition, 2002.

\bibitem{knockaert2005}
L.~Knockaert.
\newblock {A note on strict passivity}.
\newblock {\em Systems \& Control Letters}, 54(9):865--869, September 2005.

\bibitem{leine2008}
R.~I. Leine and N.~van~der Wouw.
\newblock {\em Stability and Convergence of Mechanical Systems with Unilateral
  Constraints}.
\newblock Springer-Verlag, Berlin, 2008.

\bibitem{miranda2014}
F.A Miranda and F.~Casta\~{n}os.
\newblock Robust output regulation of variable structure systems with
  multivalued controls.
\newblock In {\em Variable Structure Systems (VSS), 2014 13th International
  Workshop on}, pages 1--6, June 2014.

\bibitem{nagurney1995}
A.~Nagurney and D.~Zhang.
\newblock {\em Projected Dynamical Systems and Variational Inequalities with
  Applications}.
\newblock Innovations in Financial Markets and Institutions. Springer US, 1995.

\bibitem{ortega1998}
R.~Ortega, A.~Loría, P.~J. Nicklasoon, and H.~Sira-Ramírez.
\newblock {\em Passivity-based Control of Euler-Lagrange Systems: Mechanical,
  Electrical and Electromechanical Applications}.
\newblock Communications and Control Engineering. Springer, 1998.

\bibitem{ortega2002}
R.~Ortega, A.~van~der Schaft, B.~Maschke, and G.~Escobar.
\newblock {Interconnection and damping assignment passivity-based control of
  port-controlled Hamiltonian systems}.
\newblock {\em Automatica}, 38(4):585--596, April 2002.

\bibitem{rockafellar1970}
R.T. Rockafellar.
\newblock {\em Convex Analysis}.
\newblock Convex Analysis. Princeton University Press, 1970.

\bibitem{utkin1992}
V.I. Utkin.
\newblock {\em Sliding Modes in Control and Optimization}.
\newblock Communications and Control Engineering. Springer-Verlag, 1992.

\bibitem{shaft1996}
A.J. Van~der Schaft.
\newblock {\em L2-Gain and Passivity Techniques in Nonlinear Control}.
\newblock Lecture Notes in Control and Information Sciences. Springer, 1996.

\bibitem{willems1997}
J.~C. Willems.
\newblock On interconnections, control, and feedback.
\newblock {\em Automatic Control, IEEE Transactions on}, 42(3):326 -- 339,
  March 1997.

\bibitem{camlibel2002}
M.K. Çamlıbel, W.P.M.H. Heemels, and J.M. Schumacher.
\newblock On linear passive complementarity systems.
\newblock {\em European Journal of Control}, 8(3):220 -- 237, 2002.

\end{thebibliography}

\end{document}